\documentclass[english]{iopart}
\usepackage[T1]{fontenc}
\usepackage[latin9]{inputenc}
\usepackage{geometry}
\geometry{verbose}
\pagestyle{headings}
\usepackage{amsthm}
\usepackage{amstext}
\usepackage{amssymb}
\usepackage{graphicx}
\usepackage{esint}

\makeatletter


\DeclareFontEncoding{LGR}{}{}
\DeclareTextSymbol{\~}{LGR}{126}

\usepackage{iopams}
\usepackage{setstack}
\theoremstyle{plain}
\newtheorem{thm}{\protect\theoremname}
  \theoremstyle{plain}
  \newtheorem{conjecture}[thm]{\protect\conjecturename}
  \theoremstyle{plain}
  \newtheorem{lem}[thm]{\protect\lemmaname}

\usepackage[numbers, sort&compress]{natbib}


\newcommand{\eqref}[1]{(\ref{#1})}

\makeatother

\usepackage{babel}
  \providecommand{\conjecturename}{Conjecture}
  \providecommand{\lemmaname}{Lemma}
\providecommand{\theoremname}{Theorem}

\begin{document}

\title[On integrability of chipping models ...]{On integrability of zero-range chipping models with factorized steady
state}

\author{A.M. Povolotsky}

\address{{\large Bogoliubov Laboratory of Theoretical Physics, Joint Institute
for Nuclear Research, 141980 Dubna, Russia}}

\address{{\large National Research University Higher School of Economics,
20 Myasnitskaya Ulitsa, Moscow 101000, Russia}}

\ead{alexander.povolotsky@gmail.com}
\begin{abstract}
Conditions of integrability of general zero range chipping models
with factorized steady state, which were proposed in {[}Evans, Majumdar,  Zia
2004 J. Phys. A \textbf{37} L275{]},
are examined. We find a three-parametric family of hopping probabilities
for the models solvable by the Bethe ansatz, which includes most of
known integrable stochastic particle models as limiting cases. The
solution is based on the quantum binomial formula for two elements
of an associative algebra obeying generic homogeneous quadratic relations,
which is proved as a byproduct. We use the Bethe ansatz to solve an
eigenproblem for the transition matrix of the Markov process. On its
basis we conjecture an integral formula for the Green function of evolution operator for the model on
an infinite lattice and derive the Bethe equations for the spectrum
of the model on a ring.
\end{abstract}

\noindent{\it Keywords\/}: {asymmetric simple exclusion process, zero-range process, Bethe ansatz,
quantum binomial}

\pacs{02.30.Ik,74.40.Gh}


\submitto{\JPA}

\maketitle

\section{Introduction}

Integrability is a key feature of stochastic particle systems, which
allows one to obtain plenty exact results. Often in a proper (scaling)
limit these results become meaningful in a context of a whole universality
class. The most prominent example is the asymmetric simple exclusion
process (ASEP) \cite{Ligget}. Its exact Bethe ansatz solution yielded
the dynamical exponent for Kardar-Parisi-Zhang (KPZ) universality
class \cite{Gwa Spohn}, the crossover function for the transition
from KPZ to Edwards-Wilkinson (EW) regime \cite{Kim}, the universal
large deviation function for a particle current \cite{Derrida Lebowitz},
e.t.c..The integrability of the totally asymmetric simple exclusion
process (TASEP) was a starting point for calculation of the universal
correlation functions in infinite systems belonging to KPZ class \cite{Johansson,Rakos Schuetz,Sasamoto Nagao,Sasamoto}.
Finally the full solution for time dependent evolution in the partially
asymmetric exclusion process (PASEP) \cite{Tracy Widom 1} and subsequent
calculation of the distribution of tagged particle position \cite{Tracy Widom 2}
culminated in the exact solution of the KPZ equation \cite{Sasamoto Spohn,Amir Corwin Quastel}. The same
result was also obtained from studies of polymer in random media \cite{Calabrese Le Doussal Rosso},
 based on the solution \cite{Dotsenko Klumov,Dotsenko} of  another integrable model, Lieb-Liniger bosons with delta interaction \cite{Lieb Liniger} (for review of the whole story see \cite{Corwin}).

The integrability imposes restrictive limitations on dynamical rules
governing an evolution of stochastic particle systems. Though the
concept of universality extends the range of applicability of the
results obtained for existing models, a limited choice of such models
makes search for new integrable dynamics an important challenging
problem. It is of interest to include new interactions, that could
be used to check stability of universal quantities against modifications
of the dynamical rules. Several integrable models generalizing the
ASEP were proposed. There are ASEP-like models with long range jumps
\cite{Alimohammad Karimipou  Khorrami,Alimohammad Karimipou  Khorrami 2},
an interacting diffusion without exclusion \cite{sasamoto wadati,sasamoto-wadati 2},
the models with non-local avalanche dynamics \cite{Priezzhev  Ivashkevich Povolotsky Hu},
zero range processes \cite{Povolotsky}, e.t.c. For discrete time
dynamics there were several versions of updates proposed: random sequential
\cite{Lee Kim}, backward sequential \cite{Priezzhev Pramana}, parallel \cite{Povolotsky Mendes},
sub-lattice parallel \cite{Priezzhev Poghosyan Schuetz} and generalized
\cite{Derbyshev Poghosyan  Povolotsky  Priezzhev} update, each having
its own history and appeared in different contexts and for various
applications \cite{Rajewsky  et al}. For every mentioned integrable
model the Markov matrix of transition probabilities governing the
time evolution can be diagonalized by the Bethe ansatz, which makes
calculation of many physical quantities of interest possible, at least
in principle.

Stochastic systems of interacting particles on the lattice with product
stationary measure attracted significant attention \cite{Evans}.
The reason is that when the stationary measure has a simple form of
the product of one-site factors, the observables over generally non-equilibrium
stationary states can be evaluated using the toolbox of equilibrium
statistical mechanics of non-interacting particle systems. The simplest
example is continuous time ASEP where the stationary measure is a
product of one-site Bernoulli measures. More complex case is zero
range process (ZRP) where a single particle can jump to the neighboring
site with probability depending only on occupation number of the site
of departure. The stationary state of this model was shown to be rich
enough. In particular it demonstrates a real space condensation transition
for certain choice of hopping probabilities \cite{Evans Haney}. The
most general dynamics with on-site interaction leading to a factorized
stationary state was considered by Evans et. al. \cite{evans majumdar zia}.
They found necessary and sufficient condition for existence of the
product stationary measure in a class of models with multiparticle
chipping dynamics. These conditions prescribe a certain functional
form for hopping probabilities.

The question we address in the present paper is: What is the most
general integrable version of the latter model? Similar question was
addressed earlier with respect to a particular case of this model, zero range process, first
with continuous \cite{Povolotsky} and later with discrete time \cite{Povolotsky Mendes}
dynamics. As a result the processes were obtained with hopping probabilities
depending on two parameters and expressed in terms of so-called q-numbers.
The result of the present paper is a three parametric family of hopping
probabilities having a the functional form proposed in \cite{evans majumdar zia},
which ensure the integrability of the Markov dynamics. The hopping
probabilities are obtained from the requirement that the Markov matrix
is diagonalizable by the coordinate Bethe ansatz. We obtain eigenvectors
and eigenvalues of the Markov matrix. For an infinite lattice they can be used to construct
the Green function, the transition probabilities between particle configurations for arbitrary
time, provided that  the generalized completeness relation for eigenvectors is proved. We state a
conjecture for this relation, and derive a Green function out of it.
  In the case of periodic   boundary conditions the eigenvectors are expressed in terms of solutions of the
Bethe equations we derive. We also show that the dynamics obtained includes
all the models mentioned above as particular limiting cases.

The article is organized as follows. In section \ref{sec:Model-and-results}
we formulate the model and state our main result, an expression of
the hopping probabilities. In subsection \ref{sub:Limiting-cases.}
we give an extensive survey of the models appeared in the literature
before, which can be obtained as limiting cases of our model. This
subsection is not related to the rest of the article. The Reader not
acquainted with the history of the subject may skip over subsection \ref{sub:Limiting-cases.}
on first reading. In section \ref{sec:Transfer-matrix} we describe
the method of construction of the Markov matrix, such that the transition
probabilities of the form proposed in \cite{evans majumdar zia}
providing an existence of the factorized steady state
are  restricted to ensure the integrability
of the model. The latter condition suggests that all the interactions
are introduced via two-particle boundary conditions. The procedure
of the reduction of many-particle interactions to the two-particle
ones can be restated as the problem of writing the binomial formula
for two elements of an associative algebra obeying generic uniform
quadratic relations. Theorem \ref{theorem: quant binom}, stated in subsection \ref{sub:Generalized-quantum-binomial}
is the result of the solution of this problem. The proof of Theorem
\ref{theorem: quant binom} is carried over to \ref{sec: app1}. Section
\ref{sec:Bethe-ansatz} is devoted to application of the Bethe ansatz
 to diagonalization of the Markov matrix constructed. As a consequence,
we state a conjecture for an integral formula  of the Green function
of an evolution operator for an infinite lattice (subsection \ref{sub:Infinite-lattice-and})
 and derive the Bethe equations for the spectrum of the evolution operator
of the system on a ring (subsection \ref{sub:Discrete-spectrum-on}).
In subsection \ref{sub:ZRP-ASEP-transformation} we generalize the
Bethe ansatz to a model with the exclusion interaction related to our
model. Some concluding comments are given in section \ref{sec:Conclusion}.

\section{Model and results\label{sec:Model-and-results}}

Consider particles on a one-dimensional lattice. They live in sites
of the lattice with no restrictions on the number of particles at
a site. A configuration of particles is uniquely specified by a collection
of occupation numbers $\mathbf{n}\equiv\{n_{i}\}_{i\in\mathcal{L}}$,
where $n_{i}\in\mathbb{Z}_{\geq0}$ and the set $\mathcal{L}$ is either $\mathbb{Z}$
for an infinite lattice or $\mathbb{Z\mathrm{\mathit{/L\mathbb{Z}}}}$
for a ring of size $L$. In the latter case we impose periodic boundary
conditions $n_{1}\equiv n_{L+1}$. The system evolves in discrete
time according to the following dynamical rules. At each time step
$m$=$0,\ldots,n$ particles from a site occupied with $n$ particles
jump to the next site on the right with probability $\varphi(m|n)$,
which satisfies the normalization condition
\begin{equation}
\sum_{m=0}^{n}\varphi(m|n)=1.\label{eq:stochastic}
\end{equation}
We suggest that the update is parallel, i.e. at every time step all
sites are updated simultaneously. Given initial probability distribution
of particle configurations $P_{0}(\mathbf{n})$, the system is characterized
by the probability $P_{t}(\mathbf{n})$ for the system to be in configuration
$\mathbf{n}$ at subsequent moments of time $t$. This probability
obeys master equation
\[
P_{t+1}(\mathbf{n})=\sum_{\mathbf{n'}}\mathbf{M_{\mathbf{n,}n'}}P_{t}(\mathbf{n'}),
\]
with transition matrix \textbf{M} defined by the above dynamical rules:
\begin{equation}
\mathbf{M_{n,n'}}=\sum_{\{m_{k}\in\mathbb{Z}_{\geq0}\}_{k\in\mathcal{L}}}\prod_{i\in\mathcal{L}}T_{n_{i},n_{i}'}^{m_{i-1},m_{i}},\label{eq:Markov matrix}
\end{equation}
where
\[
T_{n_{i},n_{i}'}^{m_{i-1},m_{i}}=\delta_{(n_{i}-n_{i}'),(m_{_{i-1}}-m_{i})}\varphi(m_{i}|n_{i}')
\]
 and we define $\phi(m|n)=0$ for $m>n$. The dynamics conserves the
total number of particles, that is to say that the matrix $\mathbf{M}$
is block-diagonal with blocks indexed by the number of particles on
the lattice. Within the blocks corresponding to any finite number
of particles the transition probabilities are well defined. In the
following we will work within the sector with this number fixed and
finite, $\sum_{i\in\mathcal{L}}n_{i}=N<\infty$. The stationary state
is the right eigenvector of the matrix $\mathbf{M}$ corresponding
to the largest eigenvalue $\Lambda=1.$ Its existence is ensured by
the stochasticity condition (\ref{eq:stochastic}), which is equivalent
to the fact that there is a corresponding left eigenvector with all
components equal to one. The stationary state is unique, provided
that the fixed particle number blocks of $\mathbf{M}$ are non-degenerate.
In the latter case for a finite lattice the state vector can be normalized
to have a meaning of stationary probability measure, with each vector
component giving the probability of corresponding configuration. That
the components are real is ensured by the Perron-Frobenius theorem.
The stationary state is the state the system eventually arrives at
in the large time limit. On the infinite lattice, due to translation invariance, there are no
stationary probability measures that exhibit a finite number of
particles in a typical configurations. However, the stationary measure
(unnormalized) still can be defined by components of stationary state eigenvector.
It is going to play an important role in the subsequent analysis.

It was shown in \cite{evans majumdar zia} that the stationary measure
$P_{st}(\mathbf{n})$ is a product measure
\begin{equation}
P_{st}(\mathbf{n})=\prod_{i\in\mathcal{L}}f(n_{i}),\label{eq:stationary state}
\end{equation}
if and only if there exist two functions $w(m)$ and $v(m)$, such
that
\begin{equation}
\varphi(m|n)=\frac{v(m)w(n-m)}{\sum_{k=0}^{n}v(k)w(n-k)}.\label{eq: phi(n|m)}
\end{equation}
Then the one-site weights $f(n)$ will read as
\begin{equation}
f(n)=\sum_{k=0}^{n}v(k)w(n-k).\label{eq:normalization}
\end{equation}
 Hence the question we address is the following. What should be the
form of the functions $w(m)$ and $v(m)$ for the matrix $\mathbf{M}$
to define an integrable model? Below we use the Bethe ansatz to diagonalize
the matrix $\mathbf{M}$. Its applicability imposes certain constraints
on the form of $v(k)$ and $w(k)$. In a nutshell the procedure consists
in a solution of one- and two-particle problems, which is always possible
for arbitrary $\varphi(m|n)$, while those for three and more particles
must be in a sense reduced to the former ones. Therefore, assuming
that the one- and two-particle hopping probabilities $\varphi(1|1),\varphi(2|1)$
and $\varphi(2|2)$ can take arbitrary values, we uniquely fix the
three parameter family of jumping probabilities $\varphi(m|n)$, which
is the main result of the present paper. With more convenient parametrization
in terms of three real numbers $q,\mu$ and $\nu$ we obtain the following
expressions for functions $v(n)$ and $w(n)$

\begin{equation}
v(k)=\mu^{k}\frac{(\nu/\mu;q)_{k}}{(q;q)_{k}},\,\,\,\, w(k)=\frac{(\mu;q)_{k}}{(q;q)_{k}},\label{eq:v(k), w(k)}
\end{equation}
where the notation $(a;q)_{n}$ is used for q-Pochhammer symbol,

\[
(a;q)_{n}=\left\{ \begin{array}{ll}
\prod_{k=0}^{n-1}(1-aq^{k}), & n>0\text{;}\\
1, & n=0;\\
\prod_{k=1}^{|n|}(1-aq^{-k})^{-1},\,\, n<0.
\end{array}\right.
\]
As discussed in \ref{sec: app1}, the one-site weight following from
(\ref{eq:normalization}) is

\begin{equation}
f(n)=\frac{(\nu;q)_{n}}{(q,q)_{n}}\label{eq:f(n)}
\end{equation}
and the jumping probabilities are

\begin{equation}
\varphi(m|n)=\mu^{m}\frac{(\nu/\mu;q)_{m}(\mu;q)_{n-m}}{(\nu;q)_{n}}\frac{(q;q)_{n}}{(q;q)_{m}(q;q)_{n-m}}.\label{eq:distrib}
\end{equation}
The following range of the parameters $q,\mu$ and $\nu$ is such that $\varphi(m|n)$
is a probability distribution in $m$. It is enough that the functions $v(m)$ and
$w(m)$ are nonnegative reals for all $m$. In particular this always
holds when $0\leq \nu\leq\mu$ and $|q|<1$.  However, in some cases the
model remains meaningful beyond this range. Below we will assume this range everywhere,
if the opposite is not stated explicitly.

The function given in (\ref{eq:distrib}) has been known as the weight
function associated with q-Hahn polynomials. Otherwise, to our knowledge,
it was not used, neither as hopping probability in interacting particle
models, nor in a more general probabilistic context. However, many
of its limiting cases were.

For further discussion we introduce the notations for a few other
q-analogues. These are q-number
\[
[n]=\frac{1-q^{n}}{1-q},
\]
q-factorial
\[
[n]!=[1]\times\dots\times[n],\,\,\,[0]!=1
\]
and q-binomial coefficient
\[
{n \brack m}=\frac{[n]!}{[m]![n-m]!}=\frac{(q;q)_{n}}{(q;q)_{m}(q;q)_{n-m}}.
\]
They turn into the usual number, factorial and binomial coefficient
respectively in the limit $q\to1,$ while the q-Pochhammer symbol
is related to the ordinary Pochhammer symbol $(a)_{n}$ by limiting
transition $(q^{a},q)_{n}/(1-q)^{n}\to(a)_{n}$.

One can see that the second fraction in (\ref{eq:distrib}) is a q-binomial
coefficient and the jumping probabilities obtained are reminiscent
of those defining the binomial distribution. Indeed, in the limit $q\to1$ they converge
to the regular binomial probabilities
\[
\lim_{q\to1}\varphi(m|n)=p^{m}(1-p)^{n-m}\left(\begin{array}{c}
n\\
m
\end{array}\right),
\]
which are the probabilities of $m$ successes in a series of  $n$ independent Bernoulli trials
with the success probability
\begin{equation}
p=\frac{\mu-\nu}{1-\nu}\label{eq:p}.
\end{equation}
Thus, the original
formula (\ref{eq:distrib}) is a two-parametric deformation of the
binomial distribution. Several deformations of the binomial distribution
were discussed in the probabilistic literature \cite{Johnson Kemp Kotz}.
Some of them can be obtained from our hopping probabilities by limiting
transitions. Specifically, the two-parametric distribution obtained
from $\varphi(m|n)$ in the limit $\mu\to 0$ and $\nu<0$ gives a distribution
of the number of successes in $n$ independent Bernoulli trials, where the success
probability depends on the number of the trial, so that the odds of success (probability of success divided by  probability of failure) geometrically decreases, $\theta_i\equiv p_i/(1-p_i)=pq^{i-1}$. It was considered in \cite{Kemp Kemp}
as a candidate for stochastic model for the dice throwing data, and
also can be obtained as a stationary distribution for a random process
describing dynamics of dichotomized parasite populations \cite{Newton Kemp}.
The distribution obtained in the limit $\nu\to 0$ was proposed in
\cite{Jing Fan,Jing} in order to construct a q-binomial state, which
interpolated between the coherent and particle number states of q-oscillator.
Both these limiting expressions of $\varphi(m|n)$ can be reinterpreted
as probability for $m$ particles to be absorbed, when $n$ particles
cross a field with random number of absorption points (traps), given
the distributions of the number of absorption points are q-analogues
of the Poisson distribution, Heine and Euler distributions respectively
\cite{Charalambides}.

\subsection{Limiting cases.\label{sub:Limiting-cases.}}

Let us consider how the known stochastic particle models are related
to our model. To this end, we first note that there are two dualities
connecting models with seemingly different dynamics. The first one,
that we refer to as ZRP-ASEP transformation, relates a system where
the number of particles at a site is unbounded (ZRP-like) to a system
where either zero or one particle at a site is allowed (ASEP-like).
The transformation consists in replacing a site with $n$ particles
by a string (compact cluster) of $n$ sites, occupied by one particle
each, plus one empty site ahead, see Fig.~\ref{fig:zrp-asep}(a).
\begin{figure}[h]
\centerline{\includegraphics[width=0.4\textwidth]{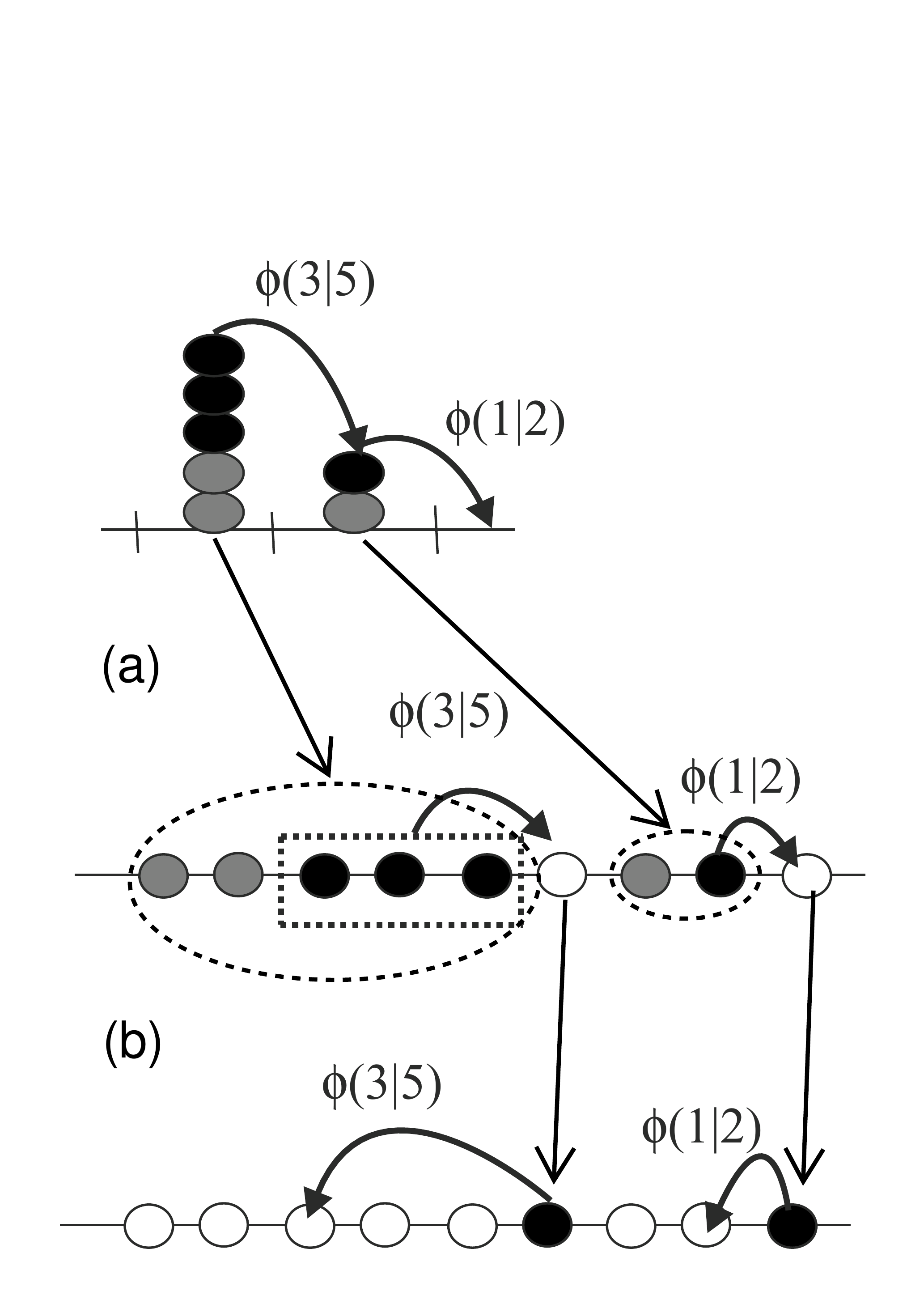}}
\caption{ZRP-ASEP (a) and particle-hole (b) transformations \label{fig:zrp-asep}}
\end{figure}
 Correspondingly, $m$ particles jumping from a site with $n$ particles
to the next site are replaced by a one-step shift of an $m-$particle
cluster detached from an $n-$particle cluster. The number of sites
of the lattice in the ASEP-like system is equal to the number of sites
in the ZRP-like system plus the number of particles.

The second duality is the particle-hole transformation, which relates
two ASEP-like systems, Fig.~\ref{fig:zrp-asep}(b). It interchanges the occupied and empty sites.
A jump of a particle corresponds to a shift of the cluster of holes
in the opposite direction. With this comment in mind we sketch a list
of known models that can be obtained as limiting cases of our hopping
probabilities (\ref{eq:distrib}).

\subsubsection*{Non-interacting particles, $q=1.$ }

It was noted above that the $q\to1$ limit of the hopping probabilities
gives us the binomial distribution. This corresponds to the free non-interacting
particles, each performing the Bernoulli random walk independently
of the others. That is to say that every time step all particles attempt
to make jumps with the same probability $p$ given in (\ref{eq:p}).
The binomial coefficient counts the number of ways to choose $m$
out of $n$ particles in a site. Note that taking the limit $q\to1$
reduces the dependence on the two other parameters to a single parameter
$p$. When $q\neq1,$ the parameters $\nu$ and $\mu$ are responsible
for an interaction between particles, which can be either repulsive
or attracting, accelerating or deceleration the global motion. The
Bernoulli random walks evolving with discrete time $t$ can be transformed
into continuous time Poissonian random walks by taking limits $p\to0$,
i.e. $\mu\to\nu$, and $t\to\infty$ simultaneously, so that new continuous
time $\tau=tp$ remains finite.

\subsubsection*{TASEP with generalized update, $q=0.$ }

In the ZRP picture particles jump from a site occupied by $n$ particles
with probabilities $\varphi(0|n)=(1-p)$, $\varphi(m|n)=(1-\mu)p\mu^{m-1}$ for
$0<m<n$ and $\varphi(n|n)=p\mu^{n-1}$. After ZRP-ASEP transformation
this limit reproduces the process proposed in \cite{Derbyshev Poghosyan  Povolotsky  Priezzhev}.
Consider a version of the TASEP, with backward sequential update,
where a particle jumping to a site remembers whether this site was
empty or occupied before the current time step. Specifically, during
an update each cluster of particles is scanned from the rightmost
to the leftmost particle. The first particle makes an attempt to jump
forward with probability $p.$ In the case of success the second one
tries to jump with probability $\mu$, generally different from $p$,
and so do the third, forth e.t.c.. If eventually a particle has failed
to jump, all the subsequent particles within the same cluster will
stay with probability one. If $\mu=p,$ we obtain the usual TASEP
with backward sequential update, while $\mu=0$ corresponds to the
parallel update case. Within the range $0<\mu<1$ the effective interaction
varies from repulsive to attractive, with the limit $\mu\to1$ corresponding
to particles sticking together. The limit $p\to0$ is the continuous
time version of the TASEP if $\mu\sim p$ and the continuous time
fragmentation model when $\mu$ stays finite.

\subsubsection*{Multiparticle hopping asymmetric diffusion and long range hopping
models, $\nu\to\mu=q$.}

If in this limit, if we set $p=dt$, the hopping probabilities simplify
to $\varphi(m|n)\simeq dt/[n]_{1/q}$, where the subscript $1/q$
indicates that the q-number deformation parameter is $q^{-1}$, rather
than $q$. The model with similar hopping probabilities was first
proposed in \cite{sasamoto-wadati 2}. It, however, admitted particle
jumps in both directions, and the asymmetry strength was rigidly
bound to parameter $q$. Its generalization, where the asymmetry strength
is unrelated to $n-$dependence of the hopping probabilities, was
considered in \cite{lee}. Its totally asymmetric version is given
by the limit under consideration. After ZRP-ASEP transformation we
obtain the ASEP-like model, proposed in \cite{Alimohammad Karimipou  Khorrami},
where a particle pushes its right neighbouring particles to the right
with the rate $r_{n}=1/[n]_{1/q}$, depending on the number $n-1$
of these particles. The model interpolates between the continuous
time TASEP and the drop-push model \cite{Schuetz Ramaswamy Barma},
corresponding to the limits $q\to\infty$ and $q\to0$ respectively.

\subsubsection*{q-bosonic ZRP, q-TASEP and asymmetric avalanche process, $\mu=q\nu$.}

In this case the first q-Pochhammer symbol $(\nu/\mu;q)_{m}$ vanishes
as soon as $m>1$. Therefore, only single particle jumps remain allowed.
The process obtained is a discrete time ZRP, where one particle jumps
from a site occupied by $n$ particles with probability $\varphi(1|n)=p\cdot[n]$.
The corresponding integrable model, q-boson model, was first discovered
in \cite{Bogoliubov Bullough} in the language of the algebraic Bethe
ansatz. Its Hamiltonian (continuous time) version was later discussed
as an interacting particle model in \cite{sasamoto-wadati 2}. It
appeared again in \cite{Povolotsky}, where the question addressed
was: What are the most general hopping probabilities, which make the
totally asymmetric continuous time ZRP model integrable? Later the
discrete time model was obtained by addressing the same question to
the discrete time ZRP \cite{Povolotsky Mendes}. Again, the continuous
time model can be obtained from the discrete time one by taking the
limit $p\to0.$ For the discrete time model the parameters take their
values in the range $|q|<1,0<p<1,$ while in the continuous time case
$q$ can be any real number. In the latter case the limit $q\to\infty$
corresponds to the drop-push model, where a particle goes to the next
vacant site on the right jumping over all its neighbors after exponentially
distributed waiting time or, equivalently, pushes all its right neighbors
one step to the right. After the ZRP-ASEP and particle-hole transformations
the q-bosonic ZRP becomes so called q-TASEP, where $\varphi(1|n)$
is a probability for a particle to jump one step forward, given its
headway  is $n$. The q-TASEP appeared recently as a limiting case
of the Macdonald process \cite{Borodin Corwin} and was used to study
a semi-discrete polymer in a random media \cite{Borodin Corwin Ferrari}.

Another interesting continuous time limit of this model, the Asymmetric
Avalanche Process (AAP) \cite{Priezzhev  Ivashkevich Povolotsky Hu},
is obtained in the limit $\nu\to0$ and $|q|<1$. Setting $1-p=dt$
and going to a moving reference frame, which shifts one step to the
right every time step, we obtain an ASEP-like model, where the transitions
between particle configurations are described in terms of non-local
avalanche dynamics. Specifically, in the moving frame the continuous
time dynamics looks as follows. Starting from a configuration with
at most one particle at every site any particle can jump to the left
neighbouring site after exponentially distributed waiting time. If
a particle meets another particle at the target site it can carry
the latter along with itself to the next site on the left with probability
$(1-[2])$ or leave it and go further alone. In general, if in course
of the avalanche $n>1$ particles are found at the same site, either
all $n$ particles go to the next site on the left with probability
$(1-[n])$ or otherwise one particle stays and $(n-1$) particles
go. Thus, at every step the number of particles in the avalanche can
either decrease or increase by one or stay unchanged. The avalanche
ends when one particle from a pair goes to an empty site. The discrete
time avalanche dynamics is considered instant in the slow Poissonian
time scale, so that the avalanches plays the role of transitions between
ASEP-like particle configurations. The interest to this model was
caused by the phase transition from  intermittent to continuous
flow, which takes place in the infinite system at critical value of
the density of particles $\rho_{c}=1/(1-q).$ The $q=0$ limit of
AAP is again the drop-push model, in which, however, the particles
jump to the opposite direction with respect to the one mentioned above.

\subsubsection*{Geometric q-TASEP, $\nu=0.$}

Very recently a preprint \cite{Borodin Corwin-1} by Borodin and Corwin
appeared, where two version of discrete time TASEP were proposed.
 One of them is the so-called geometric q-TASEP, where
a particle is allowed to jump forward to any vacant site between it
and the next particle. The probability $p_{n,\mu}(l)$ of the jump
length $l$ depending on the headway $n$ can be obtained from our
$\varphi(l|n)$ by setting $\nu=0$. The process can be obtained from
our model by making ZRP-ASEP and particle hole transformations. Note
that the processes discussed in \cite{Borodin Corwin-1} were obtained
as a reduction of the Macdonald process, and the technique was developed
for a particular case of evolution with step initial condition. A
question was also posed whether the Bethe ansatz technique is available
to study the same problem, which would allow a consideration of other
initial and boundary conditions. The present paper answers this question
giving even more general form of hopping probabilities.

\vphantom{}Other limiting cases of our model can be considered, for
which the hopping probabilities simplify. We mentioned those that
appeared in the literature before. In addition, several models with
partially asymmetric dynamics were proposed like the PASEP \cite{Gwa Spohn},
two-parametric long range hopping model \cite{Alimohammad Karimipou  Khorrami 2},
multiparticle diffusion without exclusion \cite{sasamoto-wadati 2},
the Push-ASEP \cite{Borodin Ferrari} and the AAP with two-sided hopping
\cite{Povolotsky Priezzhev Hu}. They can not be directly obtained
as limiting cases of our totally asymmetric model. However, generality
of the model makes us expect that being interpreted as a transfer
matrix of a quantum integrable model our Markov matrix can generate
also the Hamiltonians describing jumps in both directions, see e.g.
\cite{Van Diejen}.

\section{Transfer matrix\label{sec:Transfer-matrix}}

We are going to find the conditions for the eigenproblem of the Markov
matrix
\[
\Lambda\mathbf{\Psi}=\mathbf{M}\mathbf{\Psi}
\]
to be solvable by the Bethe ansatz. Here $\mathbf{M}$ is the matrix
defined in (\ref{eq:Markov matrix},\ref{eq: phi(n|m)}), \textbf{$\mathbf{\Psi}$}
--- a column vector, and $\Lambda$ is an eigenvalue. Our solution
is based on the following observation. It is natural to expect that
the groundstate, i.e. the eigenstate of the transfer matrix corresponding
to the largest eigenvalue $\Lambda_{0}=1$, is the state of highest
symmetry and, in particular, is translationally invariant. This is
indeed the case for many models solved before. As the Bethe ansatz,
which is supposed to give the eigenvectors, is an oscillating function,
the groundstate Bethe vector should be zero momentum eigenstate, i.e.
constant for all particle configurations. On the other hand, the groundstate
of the chipping model with hopping probabilities of the form (\ref{eq: phi(n|m)})
is the product stationary state (\ref{eq:stationary state}). However,
the left eigenvector corresponding to the groundstate has exactly
the required form $\bar{\mathbf{\Psi}}_{0}^{\mathbf{T}}=(1,\dots,1)$,
where the superscript \textbf{$\mathbf{T}$ }refers to the matrix
transposition transforming a column into a row. Therefore, we may
try the Bethe ansatz to find the solution of the left eigenproblem.
\[
\Lambda\bar{\mathbf{\Psi}}^{\mathbf{T}}=\bar{\mathbf{\Psi}}^{\mathbf{T}}\mathbf{M}
\]
or equivalently
\[
\Lambda\bar{\mathbf{\Psi}}=\mathbf{M^{T}}\bar{\mathbf{\Psi}}.
\]
The key observation, first made in \cite{Povolotsky Priezzhev}, is
that the matrix $\mathbf{M}$ of the form (\ref{eq:Markov matrix},\ref{eq: phi(n|m)})
is related to its transpose $\mathbf{M^{T}}$ by simple conjugation
\begin{equation}
\mathbf{M^{T}=\Pi SMS^{-1}\Pi},\label{symmetry}
\end{equation}
where $\mathbf{S}$ is the diagonal matrix with elements
\[
S_{\mathbf{n,n'}}=\delta_{\mathbf{n,n'}}/P_{st}(\mathbf{n})
\]
and $\mathbf{\Pi}$ is the parity transformation reversing the order
of sites or equivalently the direction of motion. Indeed, consider
matrix element $\mathbf{M_{n,n'}}$ corresponding to the transition
from a configuration $\mathbf{n}$ to a configuration $\mathbf{n'}$.
In fact, on a subset with fixed number of particles
the sum in (\ref{eq:Markov matrix}) consists of the only term
\[
\mathbf{M_{n,n'}}=P_{st}(\mathbf{n})^{-1}\prod_{i\in\mathcal{L}}v(m{}_{i})w(n_{i}-m_{i}),
\]
where $m_{i}\geq0$ is the number of particles jumping from site $i$
to site $(i+1)$. Once $\mathbf{n}$ and $\mathbf{n'}$are given,
the numbers $m_{i}$ can be uniquely determined from the system of
equations
\begin{equation}
n_{i}-m_{i}=n'_{i}-m_{i-1},\, i\in\mathcal{L}.\label{eq:n_i-m_i}
\end{equation}
Only those matrix elements are nonzero, which yield non-negative $m_{i}$
for all $i\in\mathcal{L}$. Conjugation with matrix $\mathbf{S}$
affects the matrix elements of \textbf{$\mathbf{M}$} by replacing
the factor $P_{st}(\mathbf{n})^{-1}$ by $P_{st}(\mathbf{n}')^{-1}$:
\begin{equation}
(\mathbf{SMS}^{-1})_{\mathbf{n,n'}}=P_{st}(\mathbf{n'})^{-1}\prod_{i\in\mathcal{L}}v(m{}_{i})w(n_{i}-m_{i}).\label{eq:SMS}
\end{equation}
On the other hand, the matrix elements of \textbf{$\mathbf{M^{T}}$},
which can be thought of as transition weights of the time reversed
process, are
\begin{equation}
(\mathbf{M^{T}})_{\mathbf{n,n'}}=\mathbf{M_{n',n}}=P_{st}(\mathbf{n'})^{-1}\prod_{i\in\mathcal{L}}v(m'{}_{i})w(n'_{i}-m'_{i}),\label{eq:M^T}
\end{equation}
 where $m'_{i}=m_{i-1}$ is the number of particles one must transfer
back to site $(i-1)$ from site $i$ to return from $\mathbf{n'}$
to $\mathbf{n}$. Taking into account (\ref{eq:n_i-m_i}) and the
translation symmetry of the lattice, we see that the weights (\ref{eq:SMS})
and (\ref{eq:M^T}) exactly coincide. The only difference is that
in the time reversed process the particles move  in the opposite direction.

Thus, we are going to solve the eigenproblem for matrix $\mathbf{M^{0}\equiv\mathbf{SMS}}^{-1}$,
defined by matrix elements (\ref{eq:SMS}). Once its eigenvectors
$\mathbf{\Psi^{0}}$ have been found, the right eigenvectors of $\mathbf{M}$
are $\mathbf{\mathbf{\Psi}}=\mathbf{S}^{-1}\mathbf{\Psi^{0}}$ and the
left eigenvectors can be obtained by applying the parity transformation
$\bar{\mathbf{\Psi}}=\mathbf{\Pi\Psi^{0}}$. Specifically, we are
looking for functions $v(k)$ and $w(k)$ that ensure the Bethe ansatz
solvability of the eigenproblem for $\mathbf{M^{0}}$.

Before going into calculations, we note that the hopping probabilities
$\varphi(m|n)$ are invariant with respect to simultaneous transformations
$v(k)\to a\theta^{k}v(k),\,\, w(k)\to b\theta^{k}w(k)$, where $a,b$
and $\theta$ are arbitrary nonzero constants. This three-parametric
freedom can be removed by imposing three constraints on these functions.
For example we can fix the functions $w(k)$ and $v(k)$ at three
values of arguments (two for one of them and one for the other). Now
we choose two of them as
\begin{equation}
v(0)=w(0)=1.\label{eq:v(0),w(0,v(1)}
\end{equation}
Thus, we fix the stationary weight of empty site, $f(0)=1$. Before
fixing the third constraint we note that $\varphi(m|n)$ can be represented
as a function of ratios $v(n)/(v(1))^{n}$ and $w(n)/(v(1))^{n}$
rather than on $v(n)$ and $w(n)$ alone. Therefore fixing the value
of $v(1)$ is equivalent to fixing an exponential part of functions
$w(k)$ and $v(k).$ This will be done in \ref{sec: app1}, when we
go to a more convenient parametrization.

We also should note that in general the stationary measure $P_{st}(\,)$
constructed as a product (\ref{eq:stationary state}) is not normalized
even in the finite system. If we want a probability measure, the overall
normalization factor, called the partition function, must be evaluated.

As usual in the coordinate Bethe ansatz technique, we first consider
the one-particle problem. Then, the two-particle problem looks as
a direct product of one-particle problems in the range of particle
coordinates, where the interaction is absent. The interaction, which
reveals itself only at the border of physical domain of particle coordinates,
can be accounted for as boundary conditions. Then, one has to consider
many-particle problem with three and more particles on the lattice.
The condition of the Bethe ansatz solvability is that all the many
particle interactions are introduced via the two-particle boundary
conditions.

\subsubsection*{One particle. }

A representation of particle configurations equivalent to the one
used above can be given in terms of positions of particles on the
lattice. From now on we specify an $N-$particle configuration by
a set of weakly increasing coordinates of particles
\begin{equation}
\mathbf{x}=(x_{1},\leq\dots,\leq x_{N}).\label{eq:coord range}
\end{equation}
For one particle on the lattice the whole configuration is a single
particle coordinate $x_{1}\equiv x$. Then the eigenproblem reads
as follows
\begin{equation}
\Lambda_{1}\Psi^{0}(x)=p\Psi^{0}(x-1)+(1-p)\Psi^{0}(x),\label{eq:free}
\end{equation}
where $p\equiv\varphi(1|1)=v(1)/(v(1)+w(1)).$ The corresponding stationary
weights are $f(1)=v(1)+w(1).$ As we discussed above, the parameter
$p$ depends only on the ratio $w(1)/v(1)$.

\subsubsection*{Two particles.}

Now we have to consider the cases with two particles located at different
sites, $x_{1}<x_{2}$, and at the same site, $x_{1}=x_{2}\equiv x$,
separately. Inspecting the expression (\ref{eq:SMS}) of the matrix
elements of $\mathbf{M^{0}},$ we find out that in the first case
they depend on parameters of the dynamics via $p$ (i.e. via $w(1)/v(1)$)
and, in fact, look like the two independent one-particle problems
\begin{eqnarray}
\Lambda_{2}\Psi^{0}(x_{1,}x_{2}) & = & (1-p)[p\Psi^{0}(x_{1}-1,x_{2})+(1-p)\Psi^{0}(x_{1,}x_{2})],\label{eq:two patrticles free}\\
\, & + & p[p\Psi^{0}(x_{1}-1,x_{2}-1)+(1-p)\Psi^{0}(x_{1,}x_{2}-1)].\nonumber
\end{eqnarray}
If this form was valid in the whole range of particle coordinates,
there would not be any more complications comparing to the one-particle
equation (\ref{eq:free}). In the case $x_{1}=x_{2}\equiv x$, however,
the non-interacting form breaks up, and the new parameters $w(2)$
and $v(2)$ (in fact $w(2)/(v(1))^{2}$ and $v(2)/(v(1)){}^{2}$)
come into the game
\begin{equation}
\Lambda_{2}\Psi^{0}(x,x)=f(2)^{-1}[w(2)\Psi^{0}(x,x)+v(1)w(1)\Psi^{0}(x-1,x)+v(2)\Psi^{0}(x-1,x-1)],\label{eq:tw particles interact}
\end{equation}
where $f(2)=w(2)+v(1)w(1)+v(2).$ To restore the free equation (\ref{eq:two patrticles free})
let us formally rewrite it for the case $x_{1}=x_{2}\equiv x$. We
notice that term $\Psi^{0}(x,x-1)$ appears, which is beyond the physical
domain (\ref{eq:coord range}). In the following we refer to terms
of this kind as forbidden and to those within the physical domain
as allowed. It is our choice to assign the value to the forbidden
term in such a way, that it compensates the difference between free
equation (\ref{eq:two patrticles free}) and interacting one (\ref{eq:tw particles interact}).
\begin{equation}
\Psi^{0}(x,x-1)=\alpha\Psi^{0}(x-1,x-1)+\beta\Psi^{0}(x-1,x)+\gamma\Psi^{0}(x,x),\label{eq:boundary conds}
\end{equation}
 where
\begin{equation}
\alpha=\frac{v(2)/f(2)-p^{2}}{p(1-p)},\,\beta=\frac{v(1)w(1)/f(2)}{p(1-p)}-1,\,\gamma=\frac{w(2)/f(2)-(1-p)^{2}}{p(1-p)}.\label{eq:a,b,c}
\end{equation}
 The equation (\ref{eq:two patrticles free}) supplied with the boundary
conditions (\ref{eq:boundary conds}) completely define the two-particle
problem.

\subsubsection*{$N$ particles.}

For arbitrary number of particles the equations should in principle
include all the parameters $w(n)$ and $v(n)$ for $n=1,\dots,N,$
which generally can be arbitrary. The integrability, however, restricts
the choice. To make the problem solvable we try to represent our equations
as the equations for non-interacting particles with suitable boundary
conditions in the same vein as we did for the two-particle case. The
basic condition of the Bethe ansatz solvability is all the boundary
conditions being of the same form (\ref{eq:boundary conds}). This
fact reduces the set of independent parameters to those three we have
already used.

An example of the procedure for three particles, $N=3$, is as follows.
First, when we write down the equations for three particles with $\Lambda_{3}\Psi^{0}(x_{1},x_{2},x_{3})$
in the l.h.s., we find three different cases to be considered: three
particles in different sites, i.e. $x_{1}<x_{2}<x_{3}$, one particle
in one site and two in another, $x_{1}=x_{2}<x_{3}$ or $x_{1}<x_{2}=x_{3},$
and all the three particles in the same site, $x_{1}=x_{2}=x_{3}\equiv x$.
The first case, is already the equation for three independent particles.
The second one is a combination of one- and two-particle problems,
(\ref{eq:free}) and (\ref{eq:tw particles interact}), which can
be converted to the non-interacting form by applying the two-particle
boundary conditions (\ref{eq:boundary conds}) to the pairs of coordinates
in inverse order, e.g. $(x,x-1)$. An essentially new case is the
equation with $\Psi^{0}(x,x,x)$ in the l.h.s.. Again, we would like
to replace it by the equation for three independent particles. If
we write corresponding non-interacting equation, it will contain four
forbidden terms in the r.h.s: $\Psi^{0}(x,x,x-1)$, $\Psi^{0}(x,x-1,x-1)$,
$\Psi^{0}(x,x-1,x)$ and $\Psi^{0}(x-1,x,x-1).$ The idea is to express
them in terms of the allowed configurations only using the boundary
conditions of the form (\ref{eq:boundary conds}). Note that if we
simply apply our boundary conditions to the pairs of coordinates $(x,x-1)$,
some of the terms we obtain will be forbidden again. However, they
will be found among the four terms we have just mentioned. In fact,
the relations we will obtain in this way can be treated as the system
of four linear equations for four forbidden terms, which, having been
solved, yields the forbidden terms expressed via the allowed terms.
The solution must be substituted into the non-interacting equation,
so that only the allowed terms remain. Then, we compare the coefficients
coming with the allowed terms with the coefficients in the true interacting
equation and try to identify the values of $v(3)$ and $w(3)$, relying
on the expectation that the solution for the hopping probabilities
of the suggested form (\ref{eq: phi(n|m)}) exists.

The procedure for arbitrary $N$ is similar. We want to transform
the equation for non-interacting particles to the equation for interacting
particles using the two-particle boundary conditions. Remarkably,
the transformation we did to the transition matrix resulted in the
transition coefficients, which factorize into a product of single-site
terms, which depend only on the number of particles coming to a site
and on the number of particles in this site after the transition.
To illustrate this fact consider the transition in which site $x$
becomes occupied by $n$ particles after $k$ particles have arrived
from site $(x-1)$ and some particles may have jumped out. The $k$
particles that have jumped in bring the factor $\nu(k)$, while those
$(n-k)$ that have stayed bring the factor $w(n-k)$. The overall
denominator is $f(n)$ independently of the previous state of the
site. Therefore, the equation with $\Lambda\Psi^{0}(\dots,x^{n},\dots)$
in the l.h.s. will contain the sum
\begin{equation}
\sum_{k=0}^{n}\varphi(k|n)\Psi^{0}(\dots,(x-1)^{k},x{}^{n-k},\dots)\label{eq:interact n-particles}
\end{equation}
 on the right, where $x^{n}$ means a string of $n$ letters $x$,
i.e. $n$ particles in the site $x$ and the coefficients are supposed
to be of the form $\varphi(n|k)=v(k)w(n-k)/f(n)$, the same as in
(\ref{eq: phi(n|m)}). For several occupied sites we have products
of similar terms summed independently of each other. The corresponding
part of the non-interacting equation is
\begin{equation}
\sum_{k_{1}=0}^{1}\dots\sum_{k_{n}=0}^{1}p^{k_{1}+\dots+k_{n}}(1-p)^{n-(k_{1}+\dots+k_{n})}\Psi^{0}(\dots,x-k_{1},\dots,x-k_{n},\dots).\label{eq:non-interact n-particles}
\end{equation}
Our aim is to reduce one equation to the other by iterative application
of the two-particle boundary conditions. As a result we will get the
arguments of all terms $\Psi^{0}(\,)$ ordered so that all the symbols
$(x-1)$ appear on the left of the symbols $x$.

\subsection{Generalized quantum binomial\label{sub:Generalized-quantum-binomial}}

The problem can be formalized as that of the generalized quantum binomial.
Consider an associative algebra generated by two elements $A$ and
$B$, which obey a general homogeneous quadratic relation
\begin{equation}
BA=\alpha AA+\beta AB+\gamma BB.\label{eq: relations}
\end{equation}
Within the set of all words made of the symbols $A$ and $B$ we distinguish
a subset of normally ordered words, where all symbols $A$ are put
on the left of all symbols $B$, i.e. where no combination $BA$ is
present. An arbitrary homogeneous element of the algebra can be represented
as a linear combination of normally ordered words of the same degree,
obtained by repetitive application of the relation (\ref{eq: relations}).
That this representation is unique is guaranteed by the diamond lemma
\cite{diamond lemma}. A particularly interesting example of the normally
ordered representation is a non-commutative analogue of the Newton
binomial:
\begin{equation}
(A+B)^{n}=\sum_{0\leq k\leq n}\mathcal{C}_{k}^{n}A^{k}B^{n-k},\label{eq:quantum binomial}
\end{equation}
where $\mathcal{C}_{k}^{n}$ are the generalized binomial coefficients
depending on the parameters of the defining relation. In purely commutative
case, $\alpha=\gamma=0,\,\beta=1$, and in the case of $q-$commuting
variables , $\alpha=\gamma=0,\, b=q,$ $\mathcal{C}_{k}^{n}$ are
well known to be the usual binomial and the $q-$binomial coefficients,
respectively. We are interested in the case of generic coefficients
$\alpha,\beta,\gamma.$ Indeed, let us associate $A$ with $(x-1)$
and $B$ with $x$ in (\ref{eq:interact n-particles}, \ref{eq:non-interact n-particles}).
The boundary conditions (\ref{eq:boundary conds}) used to get rid
of the forbidden combinations $(\dots,x,x-1,\dots)$ act just like
the defining relations (\ref{eq: relations}). What we need is to
construct the following ``skew'' binomial sum

\begin{equation}
(pA+(1-p)B)^{n}=\sum_{m=0}^{n}\varphi(m|n)A^{m}B^{n-m},\label{eq:skew binomial}
\end{equation}
where coefficients $\varphi(n|k)$ are nothing but the jumping probabilities
to be defined. In principle, instead of defining the parameters $\alpha,\beta,\gamma$
in terms of one and two particle dynamics, we could go the other way
around, starting from assigning them any complex values considered
as input data. The resulting coefficients $\varphi(n|m)$ would define
the matrix $\mathbf{M}^{0},$ which is still diagonalizable by the
Bethe ansatz. Then, however, the problem might lose its probabilistic
content, though possibly could still be treated as some quantum or
statistical physics model. In our case the values of $\alpha,\beta,\gamma$
read from (\ref{eq:a,b,c}) satisfy relation $\alpha+\beta+\gamma=1,$
which remove one degree of freedom. On the other hand, the parameter
$p$ in the l.h.s. of (\ref{eq:skew binomial}) yields another degree
of freedom, so that we again have three free parameters, e.g. $v(2)$,
$w(2)$ and $p$ or $w(1).$ Note that (\ref{eq:skew binomial}) can
be reduced to (\ref{eq:quantum binomial}) by absorbing the parameter
$p$ into $A$ and/or $B$ and changing the defining relations accordingly,
which return us to the generic case. Also, the range of the values
of $\alpha,\beta,\gamma$ is limited by the condition that the coefficients
$\varphi(n|k)$ have the meaning of hopping probabilities, i.e. $v(n)\geq0$
and $w(n)\geq0$ for any $n\geq0.$ As we do not know whether the
generalized binomial formula for the case of generic homogeneous quadratic
relations appeared in the literature before, we state it as a theorem.
The expression of the generalized binomial coefficients (aka hopping
probabilities $\varphi(m|n)$ ) is a main result of the present paper.
The proof of this theorem is brought to \ref{sec: app1}.
\begin{thm}
\label{theorem: quant binom}Consider an associative algebra over
complex numbers with two generators $A$,$B$. Suppose the generators
satisfy the homogeneous quadratic relation (\ref{eq: relations}),
where $\alpha,\beta,\gamma$ are arbitrary complex numbers constrained
by $\alpha+\beta+\gamma=1.$ Then, for any complex number $p,$ the
coefficients \textup{$\varphi(m|n)$} of the binomial sum (\ref{eq:skew binomial})
are given by the formula (\ref{eq:distrib}), where $q,\nu$ and $\mu$
give a convenient parametrization for $\alpha,\beta,\gamma$ and $p$:

\begin{equation}
\alpha=\frac{\nu(1-q)}{1-q\nu},\,\beta=\frac{q-\nu}{1-q\nu},\,\gamma=\frac{1-q}{1-q\nu}\label{eq:a,b,c (parametrized)}
\end{equation}

and
\begin{equation}
\mu=p+\nu(1-p),\label{eq:mu}
\end{equation}
 and we suppose that $\nu\neq q^{-k}$, for any $k\in\mathbb{N}.$
\end{thm}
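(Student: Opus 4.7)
My plan is to reduce the generic three-parameter algebra to a simpler two-parameter one via a change of generator, prove the simpler case by direct induction, and translate back using a single q-hypergeometric identity.

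The first step is a simplifying substitution. Define $\hat B = (B - \nu A)/(1-\nu)$, equivalently $B = \nu A + (1-\nu)\hat B$. Substituting into (\ref{eq: relations}) and using the parametrization (\ref{eq:a,b,c (parametrized)}), a direct calculation reduces the defining relation to
\begin{equation*}
\hat B A = q A \hat B + (1-q) \hat B^2,
\end{equation*}
which is the $\nu = 0$ case of the original algebra; the $A^2$ term has been eliminated. The identity $\mu + (1-p)(1-\nu) = 1$, which follows from (\ref{eq:mu}), also gives $pA + (1-p)B = \mu A + (1-\mu)\hat B$, so the operator whose $n$-th power we need retains a convex-combination form.

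Working in the simplified algebra, I would then prove by induction on $n$ the key identity
\begin{equation*}
(xA + (1-x)\hat B)^n = \sum_{m=0}^n x^m (x;q)_{n-m} {n \brack m} A^m \hat B^{n-m}
\end{equation*}
for arbitrary $x$ (this is (\ref{eq:distrib}) specialized to $\nu = 0$, $\mu = x$). The inductive step uses the easy normal-ordering rule $\hat B^k A = q^k A \hat B^k + [k](1-q)\hat B^{k+1}$, itself established by a short induction from the simplified relation. Specializing $x = \mu$ expands $(pA + (1-p)B)^n$ in the $\hat B$-basis, while specializing $x = \nu$ supplies the change of basis
\begin{equation*}
B^n = \sum_{j=0}^n \nu^j (\nu;q)_{n-j} {n \brack j} A^j \hat B^{n-j}.
\end{equation*}

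To conclude, write the desired identity $(pA + (1-p)B)^n = \sum_l \varphi(l|n) A^l B^{n-l}$, expand each $B^{n-l}$ via the change-of-basis formula, and match coefficients of $A^m \hat B^{n-m}$ against the direct $\hat B$-basis expansion. This produces a triangular linear system for $\varphi(l|n)$. Substituting the proposed formula (\ref{eq:distrib}), using the shift identity $(\mu;q)_{n-l} = (\mu;q)_{n-m}(\mu q^{n-m};q)_{m-l}$ (and similarly for $\nu$) together with the standard identity ${n \brack l}{n-l \brack m-l} = {n \brack m}{m \brack l}$, the verification reduces to one instance of the q-Chu--Vandermonde summation
\begin{equation*}
\sum_{j=0}^m u^j (u;q)_{m-j}(v;q)_j {m \brack j} = (uv;q)_m
\end{equation*}
applied with $u = \nu/\mu$ and $v = \mu q^{n-m}$. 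The main obstacle is this last coefficient-matching computation: the bookkeeping of q-Pochhammer shifts is delicate, but the substitution in the first step — which eliminates the $A^2$ term from the algebra and reduces all the non-trivial normal-ordering to the single rule for $\hat B^k A$ — is the conceptual simplification that makes the whole argument go through.
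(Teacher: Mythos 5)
Your proposal is correct, and it takes a genuinely different route from the paper's proof. The paper works directly in the original three-parameter algebra: it postulates the normal-ordering expansion $B^{l-1}A=\sum_{k}a_{k}^{l}A^{l-k}B^{k}$, derives a quadratic recursion for the coefficients $a_{k}^{l}$, solves it on the diagonal via a Riccati difference equation, guesses the general closed form and proves it by induction, and only then feeds these coefficients into recursions for $v$, $w$ and $f$ which are solved by telescoping. Your linear change of generator $\hat B=(B-\nu A)/(1-\nu)$ kills the $A^{2}$ term (one checks $\alpha+\beta\nu+\gamma\nu^{2}=\nu$ under the parametrization \eqref{eq:a,b,c (parametrized)}), reducing everything to the algebra $\hat BA=qA\hat B+(1-q)\hat B^{2}$, where normal ordering is the one-line rule $\hat B^{k}A=q^{k}A\hat B^{k}+[k](1-q)\hat B^{k+1}$ and the binomial expansion follows from the q-Pascal identity ${n+1\brack m}={n\brack m}+q^{n+1-m}{n\brack m-1}$. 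The final coefficient match does reduce, exactly as you say, to $\sum_{j}u^{j}(u;q)_{m-j}(v;q)_{j}{m\brack j}=(uv;q)_{m}$ with $u=\nu/\mu$, $v=\mu q^{n-m}$ --- which is precisely the q-Chu--Vandermonde identity the paper itself quotes at the very end as the normalization condition $f(n)=\sum_k v(k)w(n-k)$. What your route buys is conceptual economy: all the hard combinatorics is concentrated in one classical summation, and the delicate guess-and-verify for $a_{k}^{l}$ disappears. What the paper's route buys is that it is constructive --- it derives the form of $\varphi(m|n)$ from the recursions rather than verifying a formula supplied in advance --- and it stays within the single basis $A^{m}B^{n-m}$ whose uniqueness is guaranteed by the diamond lemma. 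Two small points you should make explicit: the substitution requires $\nu\neq1$ (a degeneracy already implicit in the statement, since $(\nu;q)_{n}$ must not vanish), and the passage from the $A^{m}\hat B^{n-m}$ expansion back to the $A^{l}B^{n-l}$ expansion determines $\varphi(l|n)$ uniquely because the change of basis is unitriangular.
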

One can see that $\varphi(m|n)$ has indeed the product form (\ref{eq: phi(n|m)}).
For $n=2$ it complies with the definition (\ref{eq:a,b,c}).

\section{Bethe ansatz\label{sec:Bethe-ansatz}}

Now we are in a position to diagonalize the matrix $\mathbf{M}^{0}$.
The eigenproblem is reformulated as the free equation
\begin{equation}
\Lambda_{N}\Psi^{0}(\mathbf{x})=(1-p)^{n}\sum_{\mathbf{k}\in\{0,1\}^{\otimes N}}\left(\frac{p}{1-p}\right)^{||\mathbf{k}||}\Psi^{0}(\mathbf{x}-\mathbf{k}),\label{eq:master free full}
\end{equation}
where $||\mathbf{k}||=k_{1}+\dots+k_{N}$, supplied with the boundary
conditions
\begin{eqnarray}
\Psi^{0}(\dots,x,x-1,\dots) & = & \alpha\Psi^{0}(\dots,x-1,x-1,\dots)\nonumber \\
\, & + & \beta\Psi^{0}(\dots,x-1,x,\dots)+\gamma\Psi^{0}(\dots,x,x,\dots),\label{eq:boundary full}
\end{eqnarray}
where the parameters $\alpha,\beta,\gamma$ are given in (\ref{eq:a,b,c (parametrized)})
expressed in terms of $q$ and $\nu$. We are looking for an eigenfunction
in form of the Bethe ansatz
\begin{equation}
\Psi^{0}(\mathbf{x|z})=\sum_{\sigma\in S_{N}}A_{\sigma}\hat{\sigma}\mathbf{z}{}^{\mathbf{x}},\label{eq:Bethe ansatz}
\end{equation}
that depends on $N-$tuple of ``quantum'' numbers $\mathbf{z}=(z_{1},\dots,z_{N}).$
Here the summation is performed over the set $S_{N}$ of all permutations
$\sigma=(\sigma_{1},\dots,\sigma_{N})$ of the numbers $1,\dots,N$,
the hat symbol indicates the action of an element of the permutation
group on the functions of $N-$tuple $\mathbf{z}$, $\hat{\sigma}\mathbf{z}=(z_{\sigma_{1}},\dots,z_{\sigma_{N}})$
and $\hat{\sigma}\mathbf{z^{x}}=(z_{\sigma_{1}}^{x_{1}},\dots,z_{\sigma_{N}}^{x_{N}})$,
and $A_{\sigma}$ are the coefficients to be defined, indexed by permutations.
Substituting this ansatz into the equation (\ref{eq:master free full})
we obtain the eigenvalue as a function of the parameters $\mathbf{z}$,
\begin{equation}
\Lambda_{N}(\mathbf{z})=\prod_{i=1}^{N}\Lambda_{1}(z_{i}),\label{eq:eigenvalue}
\end{equation}
which is a product of one-particle eigenvalues
\begin{equation}
\Lambda_{1}(z)=1-p+p/z.\label{eq:Lambda_1}
\end{equation}
The boundary conditions yield the S-matrix, the ratio of two coefficients
$A_{\sigma}$ corresponding to permutations differing from each other
in an elementary transposition of two neighbors,
\begin{equation}
S(z_{i},z_{j})\equiv\frac{A_{\dots ij\dots}}{A_{\dots ji\dots}}=-\frac{\alpha+\beta z_{i}+\gamma z_{i}z_{j}-z_{j}}{\alpha+\beta z_{i}+\gamma z_{i}z_{j}-z_{j}}.\label{eq:S-matrix}
\end{equation}
Given the initial condition $A_{id}=1$ for the identical permutation
$id=(1,\dots,N)$, this can be solved to
\begin{equation}
A_{\sigma}=\mathrm{sgn}(\sigma)\prod_{1\leq i<j\leq N}\frac{\alpha+\beta z_{\sigma_{i}}+\gamma z_{\sigma_{i}}z_{\sigma_{j}}-z_{\sigma_{j}}}{\alpha+\beta z_{i}+\gamma z_{i}z_{j}-z_{j}},\label{eq:A_sigma}
\end{equation}
where $\mathrm{sgn}(\sigma)$ is the permutation sign.

To write the above formulas in a shorter form we make a variable change
\begin{equation}
z_{i}=\frac{1-\nu u_{i}}{1-u_{i}}.\label{eq:z-u change}
\end{equation}
 Then, the $S$-matrix simplifies to
\begin{equation}
S(u,v)=\frac{qv-u}{v-qu},\label{eq:S(u,v)}
\end{equation}
 the form familiar from studies in quantum integrable systems, and
the one-particle eigenvalue in new variables looks as follows
\begin{equation}
\Lambda_{1}(u)=\frac{1-\mu u}{1-\nu u}.\label{eq:Lambda_1(u,v)}
\end{equation}
Hence we have
\begin{equation}
\Lambda_{N}=\prod_{i=1}^{N}\left(\frac{1-\mu u_{i}}{1-\nu u_{i}}\right),\label{eq:Lambda(u)}
\end{equation}
and the components of the eigenvector of $\mathbf{M}^{0}$are
\begin{equation}
\Psi^{0}(\mathbf{x}|\mathbf{z})=\sum_{\sigma\in S_{N}}\mathrm{sgn}(\sigma)\prod_{i=1}^{N}\prod_{\begin{array}{c}
j>i\end{array}}\frac{u_{\sigma_{i}}-qu_{\sigma_{j}}}{u_{i}-qu_{j}}\frac{\left(1-\nu u_{\sigma_{i}}\right)^{x_{i}}}{\left(1-u_{\sigma_{i}}\right)^{x_{i}}}.\label{eq:Psi^0(u)}
\end{equation}
This is used to write the right and left eigenvectors of $\mathbf{M}$,
which, according to the discussion in the beginning of the section,
are obtained by maltiplying $\Psi^{0}(\mathbf{x}|\mathbf{z})$ by
$P_{st}(\mathbf{x})$ and by applying parity transformation to the
spacial coordinates respectively:
\begin{equation}
\Psi(\mathbf{x}|\mathbf{z})\sim P_{st}(\mathbf{x})\Psi^{0}(\mathbf{x}|\mathbf{z}),\,\,\,\bar{\Psi}(\mathbf{x|}\mathbf{z})\sim\Psi^{0}(\mathbf{\hat{\Pi}x}|\mathbf{z}).\label{eq:Psi(u)}
\end{equation}
Here, the result of the action of parity transformation  applied
to $\Psi^{0}(\mathbf{x}|\mathbf{z})$ is replacement of particle
coordinates $x_{i}$ to $-x_{i}$ and, correspondingly, inverting
the order of particles $i\to N-i+1$, i.e. $\hat{\Pi}\mathbf{x}=(-x_{N},\dots,-x_{1})$.
Note that the proportionality sign ``$\sim$'' reflects the fact that
the components of eigenvectors are defined up to an arbitrary $\mathbf{z}$-dependent
factor, which can be fixed by normalization conditions.

The spectrum of parameters $\mathbf{z}$ depends on the type of the
lattice. The infinite lattice and the ring should be considered separately.

\subsection{Infinite lattice and Green function conjecture \label{sub:Infinite-lattice-and}}

On the infinite lattice the parameters $\mathbf{z}$ can take any
values. In practice, what we want is to use the eigenfunctions to
expand the solutions of the master equation with given initial conditions.
 Specifically, the eigenvectors of the matrix $\mathbf{M}$ can be
used as an analogue of the Fourier basis. Given the probability distribution
$P_{t}(\mathbf{x})$ we would like to represent it as an integral
\begin{equation}
P_{t}(\mathbf{x})=\int\tilde{P}_{t}(\mathbf{z})\Psi(\mathbf{x}|\mathbf{z})\mathcal{M}(d\mathbf{z}),\label{eq:Fourier}
\end{equation}
where the measure  $\mathcal{M}(\,)$ and the domain of integration
have to be chosen consistent with initial conditions. Given a function
$\tilde{P}_{0}(\mathbf{z})$ that provides the integral representation
at time $t=0$, the time dependence of the Fourier coefficients directly
follows from the fact that $\Psi(\mathbf{x}|\mathbf{z})$ is an eigenvector
of the matrix $\mathbf{M}$:
\[
\tilde{P}_{t}(\mathbf{z})=\left[\Lambda_{N}(\mathbf{z})\right]^{t}\tilde{P}_{0}(\mathbf{z}).
\]
The choice of the measure and the domain of integration is verified
by examining a particular case of the initial conditions, $P_{0}(\mathbf{x})=\delta_{\mathbf{x,x^{0}}}$,
while the other initial distributions can be considered as linear
combinations of delta functions. In this case the Fourier coefficient
$\tilde{P}_{0}(\mathbf{z})$ is expected to be proportional to $\bar{\Psi}(\mathbf{y|}\mathbf{z})$,
the component of the left eigenvector of the matrix $\mathbf{M}$,
while the relation (\ref{eq:Fourier}) at $t=0$ follows from the
generalized completeness relation:

\begin{equation}
\int\bar{\Psi}(\mathbf{y|}\mathbf{z})\Psi(\mathbf{x}|\mathbf{z})\mathcal{M}(d\mathbf{z})=C_{N}\delta_{\mathbf{x},\mathbf{y}},\label{eq:completeness}
\end{equation}
where $C_{N}$ is a normalization constant. Using (\ref{eq:Psi(u)})
we write the relation in the following form
\begin{equation}
\int\Psi^{0}(\mathbf{\hat{\Pi}y}|\hat{R}\mathbf{z})\Psi^{0}(\mathbf{x}|\mathbf{z})\mathcal{M}(d\mathbf{z})=\frac{C_{N}\delta_{\mathbf{x},\mathbf{y}}}{P_{st}(\mathbf{x})},\label{eq:completeness-1}
\end{equation}
where $\hat{\Pi}\mathbf{y}=(-y_{N},\dots,-y_{1})$, and $R\mathbf{z}=(z_{N},\dots,z_{1})$.\footnote{Here, for further convenience we use an inversion $\hat{R}\mathbf{z}$
of the $N$-tuple $\mathbf{z}$. In fact, the effect of the action
of any permutation $\hat{\sigma}$ applied to $\Psi^{0}(\mathbf{x}|\mathbf{z})$
(acting on components of $\mathbf{z}$) is a multiplication of this
function by a function of $\mathbf{z}$ but not of $\mathbf{x}$,
$\hat{\sigma}\Psi^{0}(\mathbf{x}|\mathbf{z})=A_{\sigma}^{-1}\Psi^{0}(\mathbf{x}|\mathbf{z})$.
Therefore, we still have the components of the left eigenvector of
$\mathbf{M}$ under the integral, while the $\mathbf{z}$-dependent
factor can be absorbed into the integration measure. On the other
hand the full inversion can be understood in terms of scattering theory,
where $\Psi(\mathbf{x}|\mathbf{z})$ and $\bar{\Psi}(\mathbf{y|}\mathbf{z})$
play the role of in and out states: the order of momenta gets inverted
after the full scattering of all particles (see e.g. \cite{ruijsenaars}). %
} Correspondingly, given the system has started from a particle configuration
$\mathbf{x^{0}}$, the probability distribution at arbitrary time,
referred to as Green function in this case , is
\begin{equation}
G_{t}(\mathbf{x}|\mathbf{x^{0}})=C_{N}^{-1}P_{st}(\mathbf{x}^{0})\int\Lambda_{N}^{t}(\mathbf{z})\Psi^{0}(\mathbf{\hat{\Pi}x}^{0}|\hat{R}\mathbf{z})\Psi^{0}(\mathbf{x}|\mathbf{z})\mathcal{M}(d\mathbf{z}).\label{eq:Green}
\end{equation}
Our goal is to choose the integration measure and the domain, such
that the relation (\ref{eq:completeness}) holds.

The solution was first proposed in \cite{Schuetz} for the case of
continuous time TASEP. Later this program was completed for a few
models. In all the cases considered to date the integration is performed
along the product of $N$ identical contours $\Gamma_{1}\times\dots\times\Gamma_{N}$
defined by rules of going around singularities of the integrand, and
the measure is $\mathcal{M}(d\mathbf{z})=\bigwedge_{i=1}^{N}dz_{i}/(2\pi\mathrm{i}z_{i})$.

For the TASEP and drop-push models, which corresponds to $q=0,\infty$
of our model, the $S$-matrix possesses special factorization property
$S(u,v)=g(u)/g(v)$, with a rational function $g(u).$ As a result
the function $\Psi^{0}(\mathbf{x}|\mathbf{z})$ has determinantal
form. The integrations in different variables decouple, and the integral
is evaluated explicitly to a determinant of the matrix $N\times N$.
In the simplest cases of the continuous time TASEP and the drop-push model,
whose stationary measure is trivial, this matrix is upper triangular
with the diagonal elements equal to one. A little more complicated case is
the discrete time TASEP with the generalized \cite{Derbyshev Poghosyan  Povolotsky  Priezzhev}
and, in particular, parallel update \cite{Povolotsky Priezzhev}, where the stationary measure is not uniform. Then the integral evaluates to a determinant of a block
diagonal matrix, which yields exactly the inverse stationary measure.
In all these cases the Green function (\ref{eq:Green}) is the determinant
of an $N\times N$ matrix.

The situation is far more complicated when the $S$-matrix can not
be factorized into a product of one variable functions.In this case, the poles of the integrand relate different variables to each other; 
 a fine account
of their contributions is necessary to prove the formulas (\ref{eq:completeness}-\ref{eq:Green}).
This was first implemented for the PASEP by Tracy and Widom \cite{Tracy Widom 1},
who used this result as a starting point of the derivation of the
current distribution. Later, analogous proofs were given for several
other models: the two-sided PushASEP, the asymmetric zero range process
with uniform hopping rates, the asymmetric avalanche process \cite{lee-1}
and the multiparticle hopping asymmetric diffusion model \cite{lee}.

In our case, an explicit substitution of $\Psi^{0}$ to (\ref{eq:completeness-1})
yields two independent sums over permutations $\sigma$ and $\sigma'$.
By changing the summation variable in one of the sums to $\sigma''=\sigma\cdot\sigma'$,
one sum becomes trivial and we arrive at the conjecture
\begin{eqnarray*}
\sum_{\sigma\in S_{N}}\mathrm{sgn}(\sigma)\oint_{\Gamma_{1}^{0,\nu}}\dots\oint_{\Gamma_{N}^{0,\nu}}\prod_{i=1}^{N}\left(\prod_{j>i}\frac{\alpha+\beta z_{\sigma_{i}}+\gamma z_{\sigma_{i}}z_{\sigma_{j}}-z_{\sigma_{j}}}{\alpha+\beta z_{i}+\gamma z_{i}z_{j}-z_{j}}\right)\\
\,\hspace{2cm}\hspace{2cm}\times\frac{z_{\sigma_{i}}^{x_{i}-y_{\sigma_{i}}-1}dz_{i}}{2\pi\mathrm{i}}=\frac{N!C_{N}}{P_{st}(\mathbf{x})}\delta_{\mathbf{x},\mathbf{y}}.
\end{eqnarray*}
Similarly to \cite{Povolotsky Priezzhev} we expect that the contours
must encircle the poles of the integrand at $z_{i}=0,\nu$ leaving
$z_{i}=1$ and $z_{i}=\infty$ outside. The normalization coefficient
$C_{N}$ must be chosen such that the states with all particles being
at different sites are normalized to one. Therefore $$C_{N}=f(1)^{N}/N!.$$

With the use of new variables \eqref{eq:z-u change} our conjecture takes the following form.
\begin{conjecture}
\label{complteteness conjecture} Let $|q|<1$ and $|\nu|<1$. Given
two arbitrary $N-$tuples of integers $\mathbf{x}=(x_{1}\leq\dots\leq x_{N})$
and $\mathbf{y}=(y_{1}\leq\dots\leq y_{N})$, such that
\begin{eqnarray}
y_{1} & = & \dots=y_{n_{1}},\nonumber \\
y_{n_{1}+1} & = & \dots=y_{n_{1}+n_{2}},\label{eq:y_1...y_N}\\
 & \vdots\nonumber \\
y_{n_{1}+n_{2}+\dots+1} & = & \dots=y_{N},\nonumber
\end{eqnarray}
the following identity holds:
\begin{eqnarray}
\sum_{\sigma\in S_{N}}\mathrm{sgn}(\sigma)\oint_{\Gamma_{1}^{0,1}}\dots\oint_{\Gamma_{N}^{0,1}}\prod_{i=1}^{N}\left(\prod_{\begin{array}{c}
j>i\end{array}}\frac{u_{\sigma_{i}}-qu_{\sigma_{j}}}{u_{i}-qu_{j}}\right)\nonumber \\
\,\,\,\,\,\,\times\frac{\left(1-\nu u_{\sigma_{i}}\right)^{x_{i}-y_{\sigma_{i}}-1}}{\left(1-u_{\sigma_{i}}\right)^{x_{i}-y_{\sigma_{i}}+1}}\frac{du_{i}}{2\pi\mathrm{i}}=\delta_{\mathbf{x},\mathbf{y}}(1-q)^{-N}\prod_{\{n_{i}\}}\frac{(q;q)_{n_{i}}}{(\nu,q)_{n_{i}}},\label{eq:conjecture}
\end{eqnarray}
here the integration in $u_{1},\dots,u_{N}$ is performed along the
contours $\Gamma_{1}^{0,1},\dots,\Gamma_{N}^{0,1}$ encircling the
poles of the integrand at $u_{i}=0,1$ and leaving other poles outside.
\end{conjecture}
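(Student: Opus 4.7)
The plan is to prove Conjecture~\ref{complteteness conjecture} by adapting the contour-deformation strategy pioneered by Tracy and Widom \cite{Tracy Widom 1} for the analogous PASEP completeness identity, with modifications required by the q-Hahn weights appearing here. The argument naturally splits into two parts: vanishing of the integral for $\mathbf{x} \neq \mathbf{y}$, and explicit residue evaluation for $\mathbf{x} = \mathbf{y}$.

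For the off-diagonal case, I would exploit the fact that inside $\Gamma^{0,1}$ the only singularities of the integrand in $u_{\sigma_i}$ are a pole at $u_{\sigma_i} = 1$ of order at most $\max(0, x_i - y_{\sigma_i} + 1)$, together with the cross poles $u_i = qu_j$ ($i<j$) arising from the denominator of the $S$-matrix (the poles at $u_{\sigma_i} = 1/\nu$ lie outside $\Gamma^{0,1}$ since $|\nu|<1$). Permutations for which $x_i < y_{\sigma_i}$ for some $i$ make the integrand regular at $u_{\sigma_i} = 1$, killing the contribution there. The remaining residue contributions must cancel in organized groups indexed by pairs of permutations related by an elementary transposition of positions; the cancellation is driven by the unitarity $S(u,v)S(v,u) = 1$ of the $S$-matrix at the cross poles, in direct analogy with the Yang--Baxter-type cancellation that drives Tracy--Widom's proof.

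For the diagonal case $\mathbf{x} = \mathbf{y}$, the dominant contribution comes from the simultaneous residues at $u_i = 1$. When all $y_i$ are distinct, only the identity permutation contributes: for any $\sigma \neq \mathrm{id}$ there is some $i$ with $y_{\sigma_i} > y_i$, making $(1-u_{\sigma_i})^{-(y_i-y_{\sigma_i}+1)}$ polynomial and the integrand regular at $u_{\sigma_i} = 1$. The identity permutation alone contributes $(1-\nu)^{-N}$, which matches the right-hand side. When $\mathbf{y}$ has equal entries forming blocks of sizes $\{n_k\}$, permutations internal to the blocks also contribute, and the claim reduces within each block of size $n$ to the identity
\begin{equation*}
\sum_{\sigma\in S_{n}}\mathrm{sgn}(\sigma)\oint_{\Gamma^{0,1}}\cdots\oint_{\Gamma^{0,1}}\prod_{i<j}\frac{u_{\sigma_i}-qu_{\sigma_j}}{u_i-qu_j}\prod_{i=1}^{n}\frac{du_i/(2\pi\mathrm{i})}{(1-\nu u_i)(1-u_i)} = \frac{(q;q)_n}{(1-q)^n (\nu;q)_n},
\end{equation*}
which I would attack by induction on $n$. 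A useful device is the substitution $u_i = 1 - \epsilon w_i$ with $\epsilon \to 0$, which converts the multiple residue at $u_i = 1$ into a purely algebraic q-symmetrization; the residues at successive coalescences of variables at $u_i = 1$ should produce the q-Pochhammer factors $(1-\nu), (1-\nu q), \dots, (1-\nu q^{n-1})$ in the denominator through the recursion of $(\nu;q)_n$.

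The principal obstacle is the indeterminate $0/0$ in the cross factors $(u_i - qu_j)$ when several $u_i$'s within a block tend to $1$ simultaneously, together with the proliferation of cross-pole residues at $u_i = qu_j$ for the off-diagonal permutations. A viable resolution is the nested-contour formalism in which one first separates $\Gamma_1 \supset q\Gamma_2 \supset \cdots \supset q^{N-1}\Gamma_N$ so that the cross poles lie strictly inside the larger contours, evaluates the cross-pole residues sequentially, and only at the end collapses all contours to the common $\Gamma^{0,1}$, accumulating precisely the desired q-Pochhammer denominators. A useful sanity check is provided by the limiting cases $q=0$ (generalized TASEP \cite{Derbyshev Poghosyan  Povolotsky  Priezzhev}) and $\nu\to 0$ (geometric q-TASEP \cite{Borodin Corwin-1}), where the conjecture reduces to completeness identities already proven in the literature.
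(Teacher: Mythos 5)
You should first be aware that the paper contains no proof of this statement: it is explicitly left as a conjecture, verified by the author only for $N=2$, with a proof in the degenerate limit $\mu=q\nu$, $\nu\to0$ attributed to later work \cite{Borodin-Corwin-Petrov-Sasamoto}. So there is no argument in the paper to compare yours against; the only question is whether your proposal constitutes a proof on its own. It does not. It is a plausible research plan --- and, to your credit, it is aligned with the strategy that did eventually succeed for related models (Tracy--Widom-style residue bookkeeping and nested contours $\Gamma_1\supset q\Gamma_2\supset\cdots$) --- but both of its load-bearing steps are asserted rather than established. For $\mathbf{x}\neq\mathbf{y}$ you say the surviving cross-pole residues ``must cancel in organized groups'' driven by $S(u,v)S(v,u)=1$; in the actual Tracy--Widom argument this is the hard part, requiring a careful enumeration of which residues survive after deformation and a nontrivial combinatorial identity, none of which you supply. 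Unitarity of the $S$-matrix alone does not force the cancellation.

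For $\mathbf{x}=\mathbf{y}$ there are two further gaps. First, the reduction to a single-block identity needs an argument that only block-preserving permutations contribute and that the multiple integral factorizes over blocks; the cross factors $\frac{u_{\sigma_i}-qu_{\sigma_j}}{u_i-qu_j}$ couple variables belonging to different blocks, so factorization is not automatic. Second, the block identity
\begin{equation*}
\sum_{\sigma\in S_{n}}\mathrm{sgn}(\sigma)\oint\cdots\oint\prod_{i<j}\frac{u_{\sigma_i}-qu_{\sigma_j}}{u_i-qu_j}\prod_{i=1}^{n}\frac{du_i/(2\pi\mathrm{i})}{(1-\nu u_i)(1-u_i)} = \frac{(q;q)_n}{(1-q)^n (\nu;q)_n}
\end{equation*}
is itself left unproven; the $0/0$ indeterminacy when several $u_i$ coalesce at $1$, which you correctly identify as the principal obstacle, is exactly where the $(\nu;q)_n$ and $(q;q)_n$ factors must be extracted, and ``induction on $n$'' plus the substitution $u_i=1-\epsilon w_i$ is a direction, not an argument. (As a small sanity check you should also pin down the orientation convention: with counterclockwise contours the $n=1$ residue at $u=1$ gives $-1/(1-\nu)$, off by a sign from the stated right-hand side, so either the contours or the identity carry an implicit sign that your block reduction inherits.) In short, the proposal identifies the right tools and the right obstacles but proves neither half of the statement; it leaves the conjecture exactly as open as the paper does.
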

We have checked the above statement for $N=2$ and leave it a conjecture
for arbitrary $N$, since its proof requires some analytical effort
and is beyond the goals of this paper. The condition $|\nu|<1$ guarantees
that the contours $\Gamma_{i}^{0,1}$ can be chosen being circles
of the radii $1<R<1/\nu$. For larger absolute values of $\nu$ one
has to use the contours deformed accordingly. The condition $|q|<1$
ensures that the poles of the scattering matrix, corresponding to
bound states, do not contribute into the integral. In the case $|q|>1$,
where the bound states are important, they should be taken into account
explicitly. In particular, in the case of the multiparticle hopping
asymmetric diffusion model \cite{lee} the completeness relation required
contours to have a special nested structure.

Given conjecture (\ref{eq:conjecture}), the expression for the Green
function is straightforward
\begin{eqnarray*}
G_{t}(\mathbf{x}|\mathbf{y}) & = & \left(1-q\right)^{N}\prod_{\{n_{i}\}}\frac{(\nu;q)_{n_{i}}}{(q,q)_{n_{i}}},\sum_{\sigma\in S_{N}}\mathrm{sgn}(\sigma)\oint_{\Gamma_{1}^{0,1}}\dots\oint_{\Gamma_{N}^{0,1}}\\
 &  & \prod_{i=1}^{N}\left(\prod_{\begin{array}{c}
j<i\end{array}}\frac{u_{\sigma_{i}}-qu_{\sigma_{j}}}{u_{i}-qu_{j}}\right)\left(\frac{1-\mu u_{i}}{1-\nu u_{i}}\right)^{t}\frac{\left(1-\nu u_{\sigma_{i}}\right)^{x_{i}-y_{\sigma_{i}}-1}}{\left(1-u_{\sigma_{i}}\right)^{x_{i}-y_{\sigma_{i}}+1}}\frac{du_{i}}{2\pi\mathrm{i}}
\end{eqnarray*}
where the numbers $n_{i}$ are defined as in (\ref{eq:y_1...y_N}).

\subsection{Discrete spectrum on the ring\label{sub:Discrete-spectrum-on}}

For the case of the ring the periodic boundary conditions on the eigenfunctions
are imposed
\[
\Psi(x_{1},\dots,x_{N}|\mathbf{z})=\Psi(x_{2},\dots,x_{N},x_{1}+L|\mathbf{z}).
\]
A direct substitution of (\ref{eq:Bethe ansatz},\ref{eq:S-matrix})
gives
\begin{equation}
z_{i}^{L}=\prod_{i\neq j}S(z_{i},z_{j}),\label{eq:Bethe eqs 1}
\end{equation}
where the S-matrix is given in (\ref{eq:S-matrix}). To write down
the periodicity conditions explicitly we again use the variable change
(\ref{eq:z-u change}). In these variables we obtain the following
Bethe ansatz equations:
\[
\left(\frac{1-\nu u_{i}}{1-u_{i}}\right)^{L}=(-1)^{N-1}\prod_{j=1}^{N}\frac{u_{i}-qu_{j}}{u_{j}-qu_{i}},\,\,\, i=1,\dots,N.
\]
The solutions of these equations are to be substituted into the eigenvalues
and eigenvectors (\ref{eq:Lambda(u)}-\ref{eq:Psi(u)}). These equations, the eigenvalues
and the eigenvectors have appeared and been studied before in \cite{Povolotsky Mendes}, where the particular case
our model, discrete time q-ZRP, was discussed. There, however, the parameters $p,q$ and $\nu$ ($-\lambda$ of \cite{Povolotsky Mendes})  were related by  the constraint $\nu=-p/(1-p-q)$, while here they can be considered as three independent quantities.

\subsection{ZRP-ASEP transformation\label{sub:ZRP-ASEP-transformation}}

It was mentioned above that one can construct an ASEP-like process
by replacing a site with $n$ particles by a string of $n$ sites,
occupied by one particle each, plus one empty site ahead. Correspondingly,
the jump of $m$ particles from a site with $n$ particles will be
replaced by a unit right step made by a cluster of $m$ particles
detaching from the right end of an $n$-particle cluster. Technically,
the transformation suggests that the coordinates of particles are
transformed as
\begin{equation}
(x_{1},\dots,x_{N})\to(x_{1}+1,\dots,x_{N}+N),\label{eq:ZRP-ASEP}
\end{equation}
and $N$ extra sites are added to the lattice $L\to L+N$. This is
enough to establish a direct correspondence between finite time realizations
of the processes.

This transformation can also be translated into the language of the
Bethe ansatz. In the case of the ASEP-like dynamics the physically
allowed domain of coordinates is defined by strict inequalities
\begin{equation}
x_{1}<\dots<x_{n}.\label{eq:ASEP domain}
\end{equation}
The equations for $\Psi^{0}(x_{1},\dots,x_{N})$ look as the one for
non-interacting particles within this domain, while the interaction
is set in by imposing the boundary conditions, which express the forbidden
terms like $\Psi^{0}(\dots,x,x,\dots)$ via the allowed ones

\begin{eqnarray}
\Psi^{0}(\dots,x,x,\dots) & = & \alpha\Psi^{0}(\dots,x-1,x,\dots)\nonumber \\
\, & + & \beta\Psi^{0}(\dots,x-1,x+1,\dots)+\gamma\Psi^{0}(\dots,x,x+1,\dots).\label{eq:boundary full-1}
\end{eqnarray}
Again, the eigenvector has the form (\ref{eq:Bethe ansatz}) of
the Bethe ansatz, which yields the same formula (\ref{eq:eigenvalue})
for the eigenvalue. Substituted into the boundary conditions it gives

\[
\frac{A^{ASEP}_{\dots ij\dots}}{A^{ASEP}_{\dots ji\dots}}\equiv-\frac{z_{i}}{z_{j}}S(z_{i},z_{j}),
\]
where $S(z_{i},z_{j})$ is the $S-$matrix (\ref{eq:S-matrix}) obtained
for the ZRP-like dynamics. It is not difficult to see that the eigenfunction $\Psi^{0}(\mathbf{x}|\mathbf{z})$
defined as the Bethe ansatz with the coefficients $A_{\sigma}^{ASEP}$, is nothing but that obtained for ZRP expressed via the new coordinates
of the particles in the ASEP-like system. This obviously leads to
the same, up to a simple coordinate change, Green function as before.

A slight difference appears in the case of the finite ring.  In this case the ZRP-ASEP
transformation gives us one-to-one correspondence of the  realizations of the processes (the sequence of particle jumps), while
the particle configurations on the lattice are equivalent only up to a shift:   when a particle in the
ZRP makes the full rotation around the lattice restoring an original configuration,
 the corresponding  TASEP configuration must be shifted one step back. This correspondence should also be translatable  to the language of
 eigenvectors and eigenvalues,  i.e.  given the set of solutions of one system of the Bethe equations, we should reconstruct the solutions for the other, which, being
 substituted into the eigenvectors and eigenvalues, would provide the equivalence of the time evolutions. This correspondence, however, is hidden in symmetries of the
 Bethe equations and in the properties of the eigenfunctions
 constructed out of their solutions, and making it  explicit   is not a straightforward task.

In particular,
the dimensions of the state spaces, i.e. the total numbers of particle
configurations, are different, being $C_{L+N}^{N}$ for the ASEP-like systems and
 $C_{L+N-1}^{N}$ for the ZRP-like ones. These should be the multiplicities of the solutions of the Bethe equations, as every
 eigenvector corresponds to a unique  solution (provided that all the eigenvectors are linearly independent).
Imposing the periodic boundary conditions $\Psi(x_{1},\dots,x_{N}|\mathbf{z})=\Psi(x_{2},\dots,x_{N},x_{1}+L+N|\mathbf{z})$
on the lattice of the size $L+N,$ we arrive at the system of the
Bethe equations for the TASEP-like system,
\begin{equation}
z_{i}^{L}\prod_{k=1}^{N}z_{k}=\prod_{i\neq j}S(z_{i},z_{j})\,\,\,,i=1,\dots,N\label{eq:Bethe eqs ASEP}
\end{equation}
which differs from (\ref{eq:Bethe eqs 1}) by the factor $\theta=z_{1}\dots z_{N}$
in the l.h.s.. Taking products of all equations in (\ref{eq:Bethe eqs 1})
and in (\ref{eq:Bethe eqs ASEP}), we obtain $\theta^{L}=1$ and $\theta^{L+N}=1$
respectively. Let us use the variable $\theta$ instead of $z_{N}$,
leaving the other $N-1$ variables unchanged. Obviously the equation
for $\theta$ suggests that it takes $L$ and $L+N$ values for the
ZRP and ASEP cases respectively. Given $\theta$ fixed, the other
$N-1$ equations have similar, up to the factor $\theta$ in the l.h.s.,
structure in the two cases. Supposedly the multiplicities of their
solutions are the same. Hence the ratio of the total numbers of solutions
is $L/(L+N)$, which indeed must be the case. Of course, the rigorous proof of this fact  requires
more elaborate arguments, and so do the the proof of the full correspondence.

\section{Conclusion\label{sec:Conclusion}}

To summarize, we have found the three-parametric family of hopping
probabilities for the class of chipping modes with on-site interaction,
which, first, ensure the factorization of the stationary measure on
the infinite lattice and on the ring, and, second, define a Markov matrix
being a transfer matrix of an integrable model solvable by the Bethe
ansatz. Our model contains most of models known to date as particular
limiting cases. We also have given an interpretation of the model
obtained by the ZRP-ASEP transformation in terms of the ASEP-like systems
either with simultaneous jumps of clusters of particles or with long
range single-particle jumps governed by long-range interactions.

We have constructed the Bethe ansatz solution for the infinite lattice
and for the ring. In the former case we have formulated the conjecture
on the form of Green function. If it is proven, the Green function
could be used as a starting point for calculation of the distribution
of the distance traveled by a tagged particle and, potentially, of
the many-particle correlation functions. Nowadays we have a few examples
of solutions of the former problem for different models and several
types of initial conditions. The solution of the latter one is still
an open question.

For the ring we obtained the eigenvectors in terms of solutions of
the system of Bethe equations. This system can be explicitly solved
in very few cases, mostly in the thermodynamic limit. Some results
for infinite time limit can be obtained analogously to \cite{Gwa Spohn,Derrida Lebowitz,Kim}.
The most interesting task is search for correlation functions at finite
time, which would provide us with an information about KPZ-specific
crossover from the results obtained for infinite systems to finite
size behaviour. General correlation functions of the integrable models
on a ring is a long-standing challenging problem having a big history.
Some steps in this direction for stochastic particle models have recently
been done. However, a final solution to this problem has not yet been given.

In the present paper we limited ourselves by considering structural
elements responsible for integrability and did not study the physics
of the model. We expect that the scaling behaviour of the fluctuations
of particle current on the infinite lattice will be similar to that
of other models of KPZ class in the most part of the parameter space.
There are however points in this space where particles either stick
together and move as a single particle or become independent of each
other. It is of interest to study the crossover regimes between KPZ
behaviour and these points to find out how universal they are. Also
the KPZ universality may break down at the point where the particle
current loses its convexity as a function of the particle density.
Whether the particle current in our model have any special points
like that is yet to be studied.

When the article was ready to submission we came to know about a new work
\cite{Borodin-Corwin-Petrov-Sasamoto}, where an elegant Plancherel theory
was developed for the continuous time q-ZRP model.
Among the results, there is a proof of our Conjecture \ref{complteteness conjecture} about completeness of the Bethe
ansatz on the infinite lattice, completed for $\mu=q \nu, \nu\to 0 $  limit of our  model.  The method also exploited the relation
 between  forward and backward dynamics, similar to relation (\ref{symmetry}) between the Markov matrix
and its transpose. It is of interest to extend the technique of that paper to the case of three parameters.

In another article \cite{Korhonen Lee}, appeared right after our article was submitted to the journal, the formula of
the Green function for  continuous time q-ZRP model was also proved, which was then used to obtain
the integral representation for the distribution of  the left-most particle's position.

\ack{}{}

This work is supported by the RFBR grant 12-01-00242-a and the grant
of the Heisenberg-Landau program. The author is indebted to V.B. Priezzhev
for stimulating discussions of interacting particle systems, to P.N.
Pyatov for his advice about quantum binomial formulas and diamond
lemma and to V.P. Spriridonov for his comments on q-functions.

\appendix

\section{Proof of Theorem 1.\label{sec: app1}}

We need to prove that given the generators $A$ and $B$ satisfying
quadratic relation (\ref{eq: relations}), the expansion (\ref{eq:skew binomial})
holds with the expansion coefficients  (\ref{eq:distrib}),
\[
\varphi(m|n)=\mu^{m}\frac{(\nu/\mu;q)_{m}(\mu;q)_{n-m}}{(\nu;q)_{n}}\frac{(q;q)_{n}}{(q;q)_{m}(q;q)_{n-m}},
\]
where the parameters $q,\nu$ and $\mu$ parameterize $\alpha,\beta,\gamma$
and $p$ in accordance with (\ref{eq:a,b,c (parametrized)},\ref{eq:mu}).
The proof is inductive. The statement obviously holds for $n=1$.
Indeed, in this case $\varphi(0|1)=(1-\mu)/(1-\nu)=1-p$ and $\varphi(1|1)=(\nu-\mu)/(1-\nu)=p.$
We suppose that it is true for $n-1$ and prove it for $n$. Let us
rewrite $(pA+(1-p)B)^{n}$ in form $(pA+(1-p)B)^{n-1}(pA+(1-p)B)$
and apply the expansion (\ref{eq:distrib}) to the first term
\begin{equation}
(pA+(1-p)B)^{n}=\sum_{m=0}^{n}\varphi(m|n-1)\left[pA^{m}B^{n-m-1}A+(1-p)A^{m}B^{n-m}\right].\label{eq:(pA+(1-p)B)^{n+1}}
\end{equation}
To find $\varphi(n|m)$ we want the summands being normally ordered
words. What violates the normal order is the factor $B^{n-m-1}A$.
Thus we need to find expansion coefficients for the words of this
kind.

Let us suppose that for any $l\geq1,$ expansion

\begin{equation}
B^{l-1}A=\sum_{k=0}^{l}a_{k}^{l}A^{l-k}B{}^{k},\label{eq:B^{l-1}A}
\end{equation}
holds, where $a_{k}^{l}$ are the coefficients to be found. In particular,
for $l=1,2$ we have

\begin{eqnarray}
a_{0}^{1} & = & 1,a_{1}^{1}=0,\label{eq: a^1}\\
a_{0}^{2} & = & \alpha,a_{1}^{2}=\beta,a_{2}^{2}=\gamma.\label{eq:a^2}
\end{eqnarray}
 With such defined $a_{k}^{l}$, we can rewrite (\ref{eq:B^{l-1}A})
as
\begin{eqnarray*}
(pA+(1-p)B)^{n} & = & (1-p)\sum_{l=0}^{n-1}\varphi(l|n-1)A^{l}B^{n-l}\\
 & +p & \sum_{l=0}^{n}\left[\sum_{m=0}^{l}\varphi(m|n-1)a_{n-l}^{n-m}\right]A^{l}B^{n-l},
\end{eqnarray*}
which suggests
\begin{eqnarray}
\varphi(l|n) & = & p\sum_{m=0}^{l}\varphi(m|n-1)a_{n-l}^{n-m}+(1-p)\varphi(l|n-1),\,\,\, l<n\label{eq: phi(l|n)}\\
\varphi(n|n) & = & p\sum_{m=0}^{n-1}\varphi(m|n-1)a_{0}^{n-m}\label{eq: phi(n|n)}
\end{eqnarray}
 Therefore, before proceeding with finding, $\varphi(l|n)$ we first
need to find coefficients $a_{m}^{n}$. To this end we note that the
expansion (\ref{eq:B^{l-1}A}) is a consequence of a successive application
of the single relation (\ref{eq: relations}). Therefore, the coefficients
of interest satisfy  a set of constraints. To write down the constraints
let us represent $B^{l}A$ as $B^{l-1}(BA)$ and apply the relation
(\ref{eq: relations}) to the second factor.
\begin{eqnarray*}
B^{l}A & = & B^{l-1}(\alpha A^{2}+\beta AB+\gamma B^{2})\\
 & = & \sum_{k=0}^{l}a_{k}^{l}(\alpha A^{l-k}B{}^{k}A+\beta A^{l-k}B{}^{k+1})+\gamma B^{l+1}\\
 & = & \alpha\sum_{k=0}^{l}\sum_{j=0}^{k+1}a_{k}^{l}a_{j}^{k+1}A^{l-j+1}B{}^{j}+\beta\sum_{k=0}^{l}a_{k}^{l}A^{l-k}B^{k+1}+\gamma B^{l+1}\\
 & = & \alpha\sum_{j=0}^{l+1}A^{l-j+1}B{}^{j}\left(\sum_{k=j-1}^{l}a_{k}^{l}a_{j}^{k+1}\right)+\beta\sum_{j=1}^{l+1}a_{j-1}^{l}A^{l-j+1}B^{j}+\gamma B^{l+1}
\end{eqnarray*}
Here we used the expansion (\ref{eq:B^{l-1}A}) applied to $B^{l-1}A$
in the second line and applied to $B{}^{k}A$ in the third line and
then exchanged the summation order in the last line. Doing this we
adopted a convention $a_{j}^{l}=0$ for $j<0$ and $l>0$. Collecting
factors coming with $A^{l+1-j}B^{j}$ for $j=0,\dots,l+1$ and bringing
the term containing $a_{j}^{l+1}$ to the l.h.s., we have
\begin{equation}
a_{j}^{l+1}=(1-\alpha a_{l}^{l})^{-1}\left(\alpha\sum_{k=j-1}^{l-1}a_{k}^{l}a_{j}^{k+1}+\beta a_{j-1}^{l}+\delta_{j,l+1}\gamma\right).\label{eq:recursion}
\end{equation}
One can see that the set of relations has a triangular structure,
i.e. $a_{j}^{l}$ can be expressed in terms of $a_{i}^{k}$ with $k\leq l$
and $k-i\leq l-j$ only. Therefore, we first can find $a_{i}^{l}$
with $l=i,$ then with $l=i+1$ e.t.c..

$a_{l}^{l}:$ The sum in the r.h.s. of (\ref{eq:recursion}) is empty
and the relations are reduced to a simple recursion
\[
a_{l+1}^{l+1}=\frac{\gamma+\beta a_{l}^{l}}{1-\alpha a_{l}^{l}}
\]
with the initial condition $a_{1}^{1}=0$. This is a Riccati difference
equation, which can be linearized by the variable change $a_{l}^{l}=1+\eta/c_{l},$
where $\eta$ is chosen such that the equation for $c(l)$ is linear.
At this point it is more convenient to use the parameters $\nu$ and
$q$ instead of $\alpha,\beta,\gamma$. In terms of these parameters
we can choose either $\eta=1$ or $\eta=1/\nu$. Then we obtain two
relations, which go into one another under the change $q\to1/q$.
Therefore, without loss of generality we choose $\eta=1/\nu$, which
yields relation
\[
c_{l+1}=qc_{l}-\frac{\nu(1-q)}{1-\nu}
\]
subject to initial condition $c_{1}=-\nu.$ This recursion can be
solved to
\[
c_{l}=\frac{\nu(1-\nu q^{l-1})}{\nu-1},
\]
which, after going back to $a_{l}^{l}$ yields the result
\[
a_{l}^{l}=\frac{1-q^{l-1}}{1-\nu q^{l-1}}.
\]

$a_{l}^{l+1}:$ The sum in the r.h.s. of (\ref{eq:recursion}) consists
of a single term and we obtain
\begin{eqnarray*}
a_{l}^{l+1} & = & a_{l-1}^{l}\frac{\beta+\alpha a_{l}^{l}}{1-\alpha a_{l}^{l}}\\
 & =a_{0}^{1} & \prod_{k=1}^{l}\frac{\beta+\alpha a_{k}^{k}}{1-\alpha a_{k}^{k}}\\
 & = & \frac{q^{k}(q-\nu)(1-\nu)}{\left(1-\nu q^{k-1}\right)\left(1-\nu q^{k}\right)}.
\end{eqnarray*}
We can make a few more steps in this way. At every step we obtain
a linear recurrent relation that can be iterated and solved subject
to the initial condition $a_{-1}^{l}=0.$ However, the calculations
quickly become too involved. Fortunately, we can use the results of
the first few steps to make an educated guess about general structure
of $a_{k}^{l}$, which then can be proved by induction.
\begin{lem}
For $l>0$ and $k<l$ we have
\begin{equation}
a_{k}^{l}=(1-\nu)(q-\nu)\nu^{l-k-1}q^{k-1}\frac{(q;q)_{l-1}(\nu;q)_{k-1}}{(q;q)_{k}(\nu;q)_{l}}\,\,\,\mathrm{for}\,\,\, n>k\label{eq: a^l_k}
\end{equation}
 and
\begin{equation}
a_{l}^{l}=\frac{1-q^{l-1}}{1-\nu q^{l-1}}.\label{eq:a_l^l}
\end{equation}
\end{lem}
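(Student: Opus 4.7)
The plan is to prove the lemma by induction on $l$, combining the triangular recursion \eqref{eq:recursion} with a telescoping simplification. The base case $l=1$ is immediate: the initial conditions $a_0^1=1$ and $a_1^1=0$ match the claimed formulas once one invokes the convention $(\nu;q)_{-1}=q/(q-\nu)$ for the off-diagonal expression and notices that the diagonal formula gives $a_1^1=(1-q^0)/(1-\nu q^0)=0$.

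For the inductive step, assume the formula holds for every $a_j^m$ with $m\le l$. The diagonal entry $a_{l+1}^{l+1}$ has already been obtained in the text preceding the lemma via the Riccati recursion, so only the off-diagonal values $a_j^{l+1}$ with $j\le l$ need verification. Substituting the inductive formulas for $a_k^l$ and $a_j^{k+1}$ into \eqref{eq:recursion}, I would split the sum $\sum_{k=j-1}^{l-1}a_k^l a_j^{k+1}$ into the boundary term $k=j-1$ (where $a_j^{k+1}=a_j^j$ uses the diagonal formula) and the tail $j\le k\le l-1$ (where both factors are off-diagonal).

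After factoring out the $k$-independent prefactor, the tail collapses to $\sum_{k=j}^{l-1}q^{k}/[(1-\nu q^{k-1})(1-\nu q^k)]$, which telescopes via the partial-fraction identity
\[
\frac{q^k}{(1-\nu q^{k-1})(1-\nu q^k)}=\frac{q}{\nu(1-q)}\left(\frac{1}{1-\nu q^{k-1}}-\frac{1}{1-\nu q^k}\right).
\]
In parallel one uses the simplification $1-\alpha a_l^l=(1-\nu)(1-\nu q^l)/[(1-q\nu)(1-\nu q^{l-1})]$, which follows from the diagonal formula and the parametrization \eqref{eq:a,b,c (parametrized)}. The resulting rational identity has a numerator that, after cancelling the factors $(q-\nu)$ and $(1-q)$ common to $\alpha$ and $\beta$, factors as $q(1-\nu)(1-\nu q^{j-2})$. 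This produces the Pochhammer shift $(\nu;q)_{j-2}(1-\nu q^{j-2})=(\nu;q)_{j-1}$ that is exactly needed to match the claimed value of $a_j^{l+1}$.

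The main obstacle is bookkeeping rather than anything conceptual: one must track several $q$-Pochhammer factors whose arguments shift when passing between $a_k^l$, $a_j^{k+1}$, $a_{j-1}^l$ and $a_j^{l+1}$, and verify that the polynomial identity mentioned above actually produces the required factorization. The telescoping of the inner sum is the crucial simplification that keeps the induction in closed form; beyond that the argument reduces to routine applications of the recurrence $(x;q)_{m+1}=(1-xq^m)(x;q)_m$.
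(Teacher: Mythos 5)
Your proposal follows essentially the same route as the paper: induction on $l$ via the triangular recursion \eqref{eq:recursion}, with the inner sum $\sum_{k}a_{k}^{l}a_{j}^{k+1}$ evaluated by a telescoping (partial-fraction) identity — the paper states exactly this but leaves the "tedious but elementary algebra" implicit, whereas you supply the key identities (the partial-fraction decomposition of $q^{k}/[(1-\nu q^{k-1})(1-\nu q^{k})]$, the simplification of $1-\alpha a_{l}^{l}$, and the negative-index convention $(\nu;q)_{-1}=q/(q-\nu)$ needed at $k=0$), all of which check out. The argument is correct and matches the paper's proof in structure.
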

\begin{proof}
It follows from (\ref{eq: a^1},\ref{eq:a_l^l}) that the formulas
(\ref{eq: a^l_k},\ref{eq:a_l^l}) hold in the cases $l=1,2$. Suppose
they also hold for $a_{j}^{l}$ for $l\leq n$ and $j\leq l$. To
prove them for $l=n+1$ we substitute (\ref{eq: a^l_k},\ref{eq:a_l^l})
into the r.h.s. of (\ref{eq:recursion}), which, after tedious but
elementary algebra, yield the desired result. The main ingredient
of the calculation is evaluation of the sum, which turns out to have
a telescopic structure.
\end{proof}
Now we are in a position to complete the proof of the form of $\varphi(m|n).$
In fact, the principle a proof by induction simply requires that we
substitute (\ref{eq:distrib}) into the r.h.s. of (\ref{eq: phi(l|n)},\ref{eq: phi(n|n)})
and show that the l.h.s. also complies with this formula. This indeed
what will finally be done. However, we first show some steps leading
to a final formula and simplifying the problem to expressions, from
which it can be guessed.

Let us use the ansatz (\ref{eq: phi(n|m)}),
\[
\varphi(m|n)=\frac{v(m)w(n-m)}{f(n)},\,\,\,\mathrm{where}\,\,\, f(n)=\sum_{i=0}^{n}v(i)w(n-i),
\]
 which was the ansatz for the hopping probabilities that ensured the
factorized structure of the stationary state. We recall that $v(m)$
and $w(m)$ are arbitrary positive valued function of $m,$ for which
we have fixed $v(0)=w(0)=1,$ while another free parameter $v(1)$
can yet be fixed without loss of generality. Note that it is not obvious
from eqs. (\ref{eq: phi(l|n)},\ref{eq: phi(n|n)}) that $\varphi(m|n)$
must have this structure. However, once we have found the solution
of this form, which complies with the initial conditions, an induction
will provide its uniqueness. Let us write the equation (\ref{eq: phi(l|n)})
for the case $l=0,$ remembering that we assigned $v(0)=1.$
\begin{eqnarray}
\frac{w(n)}{f(n)} & = & \frac{w(n-1)}{f(n-1)}(pa_{n}^{n}+(1-p))\nonumber \\
 & = & \frac{w(0)}{f(0)}\prod_{k=1}^{n}(pa_{k}^{k}+(1-p))\label{eq:l=00003D0}\\
 & = & \frac{(\mu;q)_{n}}{(\nu;q)_{n}},\nonumber
\end{eqnarray}
where in the second line we iterated the recurrence from the first
line and substituted explicitly $a_{k}^{k}$ and $p$ in the third
line remembering that $w(0)=1$ and $f(0)=1$.

Next, we write the same relation for $l=1\text{:}$
\begin{eqnarray*}
\frac{v(1)w(n-1)}{f(n)} & = & p\frac{w(n-1)}{f(n-1)}a_{n-1}^{n}+\frac{v(1)w(n-2)}{f(n-1)}\left(pa_{n-1}^{n-1}+(1-p)\right)\\
 & = & p\frac{w(n-1)}{f(n-1)}a_{n-1}^{n}+\frac{v(1)w(n-1)f(n-2)}{f(n-1)^{2}},
\end{eqnarray*}
where going from the first line to the second one we used the first
line of (\ref{eq:l=00003D0}) to rewrite $\left(pa_{n-1}^{n-1}+(1-p)\right)$.
In this way we obtain the recurrent relations for the ratio $f(n-1)/f(n),$
\begin{eqnarray*}
\frac{f(n-1)}{f(n)} & = & \frac{p}{v(1)}a_{n-1}^{n}+\frac{f(n-2)}{f(n-1)}\\
 & =\frac{p}{v(1)} & \left(\sum_{k=2}^{n}a_{k-1}^{k}+1\right),
\end{eqnarray*}
where we used the fact that $f(0)=1$ and $f(1)=v(1)/p$ and $n\geq2$.
Then, we have
\[
f(n)=\left(\frac{v(1)}{p}\right)^{n}\prod_{l=2}^{n}\left(\sum_{k=2}^{n}a_{k-1}^{k}+1\right)^{-1}.
\]
 Substituting $a_{k-1}^{k}$
\[
a_{k-1}^{k}=\frac{(1-\nu)(q-\nu)q^{k-2}}{(1-\nu q^{k-1})(1-\nu q^{k-2})}
\]
 and evaluating telescoping sum $\sum_{k=1}^{n}a_{k-1}^{k}=(1-\nu)(1-q^{n})/((1-q)(1-\nu q^{n-1}))$
we obtain
\[
f(n)=\left(\frac{v(1)(1-q)}{p(1-\nu)}\right)^{n}\frac{(\nu;q)_{n}}{(q;q)_{n}}.
\]
 Now we recall that fixing $v(1)$ does not affect the form of $\varphi(m|n)$.
For convenience we fix it such that $f(n)$  has no exponential
part,
\[
\frac{v(1)(1-q)}{p(1-\nu)}=1,
\]
which yields the formula (\ref{eq:f(n)}),
\[
f(n)=\frac{(\nu;q)_{n}}{(q;q)_{n}},
\]
 for single site weight announced in section \ref{sec:Model-and-results}.
Then, comparing this result with (\ref{eq:l=00003D0}) we obtain
\[
w(n)=\frac{(\mu;q)_{n}}{(q;q)_{n}}
\]
 given in (\ref{eq:v(k), w(k)}).

With $w(n)$ and $f(n)$ in hands, eqs. (\ref{eq: phi(l|n)},\ref{eq: phi(n|n)})
become infinite hierarchy of equations for still unknown function
$v(m)$,
\begin{eqnarray}
v(l) & = & (\mu-\nu)\nu^{l-1}\frac{(\nu;q)_{n-l}}{(\mu;q)_{n-l}}\frac{1-\nu q^{n-1}}{1-q^{l}}\sum_{m=0}^{l-1}v(m)\frac{(\mu;q)_{n-m-1}}{\nu^{m}(\nu;q)_{n-m}},\label{eq:rec v(l)}
\end{eqnarray}
where $l\leq n$ and $n=2,3,\dots.$ For every $n$ these are recurrent
relations to be solved with the initial condition $v(1)=(\mu-\nu)/(1-q)$.
At the first glance, the problem seems being overdetermined, as we
have the infinite set of equations determining $v(l)$ at every $l.$
However, it is straightforward to check directly that the result does
not depend on $n$ and all the equations are solved by a single function
announced in (\ref{eq:v(k), w(k)}),
\[
v(m)=\mu^{m}\frac{(\nu/\mu;q)_{m}}{(q;q)_{m}}.
\]
 The proof is inductive. Obviously the initial conditions are satisfied.
Let us fix $n$ and suppose that this formula is valid for any $m<l\leq n$.
We introduce an auxiliary function
\[
s_{n,k}=\sum_{m=0}^{l}v(m)\frac{(\mu;q)_{n-m-1}}{\nu^{m}(\nu;q)_{n-m}},
\]
 which is the partial sum from the r.h.s. of (\ref{eq:rec v(l)}).
It can be directly summed to
\begin{equation}
s_{n,k}=\frac{1}{1-\nu q^{n-1}}\frac{\mu}{\mu-\nu}\left(\frac{\mu}{\nu}\right)^{k}\frac{(\mu;q)_{n-k-1}(\nu/\mu;q)_{k+1}}{(q;q)_{k}(\nu;q)_{n-k-1}},\label{eq:s_{n,k}}
\end{equation}
which is proved by another induction in $k$. Specifically, the formula
holds for $s_{n,0}$ and identity
\[
s_{n,k}=s_{n,k-1}+v(k)\frac{(\mu;q)_{n-k-1}}{\nu^{k}(\nu;q)_{n-k}}
\]
 can be checked directly. It follows from (\ref{eq:rec v(l)}) that
\[
v(l)=(\mu-\nu)\nu^{l-1}\frac{(\nu;q)_{n-l}}{(\mu;q)_{n-l}}\frac{1-\nu q^{n-1}}{1-q^{l}}s_{n,l-1},
\]
and substituting (\ref{eq:s_{n,k}}) we arrive at the expression for
$v(l)$ from (\ref{eq:v(k), w(k)}), which is independent of $n$.

Finally, we have shown that the recursion relations (\ref{eq: phi(l|n)},\ref{eq: phi(n|n)})
hold for $v(l)$, $w(l)$ and $f(n)$ found. Note that in the above
proof we did not use the fact that $f(n)$ is the convolution of $v(n)$
and $w(n)$, except for $f(1).$ This fact, however, being nothing
but the normalization condition, is related to the probabilistic nature
of the problem. Indeed the corresponding three parametric generalization
of the binomial theorem,
\[
\sum_{m=0}^{n}\mu^{m}(\nu/\mu;q)_{m}(\mu;q)_{n-m}\left[\begin{array}{c}
n\\
m
\end{array}\right]=(\nu;q)_{n},
\]
 proved in the theory of basic hypergeometric series can be found
in \cite{Gasper Rahman}.

\end{document}